\begin{document}
\title{The relationships between message passing, pairwise, Kermack-McKendrick and stochastic SIR epidemic models \thanks{Research sponsered in part by The Leverhulme Trust Grant RPG-2014-341 to KJS}}

\titlerunning{Message passing, pairwise and Kermack-McKendrick epidemics}        

\author{Robert R. Wilkinson        \and
        Frank G. Ball  \and Kieran J. Sharkey
}


\institute{Robert R. Wilkinson \at
              Department of Mathematical Sciences, University of Liverpool, \\
               Peach Street,
              Liverpool L69 7ZL, UK \\
              \email{robertwi@liv.ac.uk}      \and Frank G. Ball
              \at
              School of Mathematical Sciences, University of Nottingham, \\University Park, Nottingham NG7 2RD, UK \\
              \email{Frank.Ball@nottingham.ac.uk}  
     \and Kieran J. Sharkey 
              \at
              Department of Mathematical Sciences, University of Liverpool, \\ Peach Street,
              Liverpool L69 7ZL, UK \\
              \email{kjs@liv.ac.uk}       \\
}

\date{Received: date / Accepted: date}

\maketitle

\begin{abstract}
	We consider a very general stochastic model for an SIR epidemic on a network which allows an individual's infectious period, and the time it takes to contact each of its neighbours after becoming infected, to be correlated. We write down the message passing system of equations
	for this model and prove, for the first time, that it has a unique feasible solution. We also generalise an earlier result by proving that this solution provides a rigorous upper bound for the expected epidemic size (cumulative number of infection events) at any fixed time $t > 0$.
	
	We specialise these results to a homogeneous special case where the graph (network) is symmetric. The message passing system here reduces to just four equations.  We prove that cycles in the network inhibit the
	spread of infection, and derive important epidemiological results
	concerning the final epidemic size and threshold behaviour for a major outbreak. For Poisson contact processes, this message passing system is equivalent to a non-Markovian
pair approximation model, which we show has well-known pairwise models as
	special cases.
	
	We show further that a sequence of message passing systems, starting with the homogeneous one just described, converges to the deterministic Kermack-McKendrick equations for this stochastic model. For Poisson contact and recovery, we show that this convergence is monotone, from which it follows that the message passing system (and
	hence also the pairwise model) here provides a better approximation to the
	expected epidemic size at time $t>0$ than the Kermack-McKendrick model.
	
\keywords{Stochastic SIR epidemic \and Kermack-McKendrick model \and Non-Markovian  \and Message passing \and Pairwise \and Network} 
 \subclass{92D30 \and 91D30}
\end{abstract}

\section{Introduction}

One of the earliest and most comprehensively analysed epidemic models is the susceptible-infected-recovered (SIR) model of Kermack and McKendrick (1927). In addition to providing insights into threshold behaviour and vaccination, it has also underpinned much subsequent work in applied mathematical epidemiology (Anderson and May 1992). A stochastic version, constructed from similar assumptions, was defined and analysed later (for example, Bailey 1975, Chapter 6) and it became of interest to understand the relationship between the two (Kurtz 1970, 1971; Barbour 1972, 1974).

More recently, various heterogeneities have been added to both deterministic and stochastic epidemic models. A particularly important one is the contact network which allows for specific relationships between pairs of individuals; see Danon et al. (2011) and Pastor-Satorras et al. (2015) for reviews. While it is straightforward to simulate stochastic epidemics on networks, deterministic approximations have also been developed to assist our understanding. Important examples of these include pair approximation (Keeling 1999; Sharkey 2008), message passing (Karrer and Newman 2010) and edge-based models (Miller et al. 2011).

The message passing approximation for stochastic epidemics was developed by Karrer and Newman (2010) and is central to the work that we present here. This approach allows one to exactly capture the marginal distributions for the health statuses of individuals (i.e whether they are susceptible, infected or recovered) when the contact network is a tree and provides useful rigorous bounds for these distributions otherwise. Notably, the message passing approach is also applicable to extremely general non-Markovian stochastic epidemics and the number of equations it requires scales linearly with the number of connected pairs of individuals; far fewer than the number of Kolmogorov forward equations for the Markovian case which scales exponentially with population size. Wilkinson and Sharkey (2014) showed that, when contact processes are assumed to be Poisson, a generalised version of the message passing equations is equivalent to a pairwise model that is defined at the level of individuals, thus unifying two major representations of epidemic dynamics. Their argument relies on the application of Leibniz's integral rule, so here we take the opportunity to provide sufficient conditions for the applicability of that rule in this context (appendix~\ref{cont}).

In section~\ref{second} we define a more general stochastic model which allows for realistic correlations between contact times and infectious periods. Specifically, it allows all of an individual's post-infection contact times (to each of its neighbours), and the negative of its infectious period, to be positively correlated. This could capture, for example, a scenario where infected individuals adopt some disease-combating behaviour such as taking antiviral medication, increasing the infectious contact times to all of their neighbours and decreasing their infectious period. We write down the message passing system for this stochastic model in subsection \ref{graphmes} and then, for the first time, provide a non-restrictive sufficient condition for the message passing equations to have a unique feasible solution (Theorem~\ref{th1}). This is important because so far, the message passing construction of Karrer and Newman has not been shown to give rise to a unique epidemic. We then, in subsection~\ref{boundsec}, extend the results found in Karrer and Newman (2010) and Wilkinson and Sharkey (2014) to this more general stochastic model; for example, the message passing system cannot underestimate the expected epidemic size at any time $t>0$, i.e. the expected number of
susceptibles infected during $(0,t]$ (Theorem~\ref{ineqtheorem} and Corollary~\ref{boundcor}). This is what led Karrer and Newman to describe the message passing system as providing a `worst case scenario'. 

For all of section \ref{symsec}, we focus on a special case of the above stochastic model which assumes a contact structure with a large amount of symmetry and that all individuals behave in the same way. We refer to this special case as the `homogeneous stochastic model'. The corresponding message passing system is written down in subsection~\ref{messagesym} and, after exploiting symmetries, this reduces to a system comprising of only four equations which we refer to as the `homogeneous message passing system'. This system is identical in form to a special case of the equations formulated by Karrer and Newman (2010, equations 26 and 27), although here it is related to a different stochastic model. We then obtain several epidemiologically relevant results in subsection~\ref{epiresults}: the stochastic epidemic is shown to be inhibited by cycles in the contact network (Theorem \ref{cyc}), a simple relation for an upper bound on the final epidemic size in the stochastic model is proved and sufficient conditions for no major outbreak in the stochastic model are found (Theorem \ref{outbreakthm}). The latter gives an upper bound on the critical vaccination coverage to prevent a major outbreak, assuming a perfect vaccination.    

As a special case of the general correspondence shown in Wilkinson and Sharkey (2014), the homogeneous message passing system has an equivalent non-Markovian pairwise model when the contact processes are Poisson. In subsection~\ref{pairwise} we write down these equations explicitly (Theorem~\ref{pairthm}). This pairwise model provides exactly the same epidemic time course as the homogeneous message passing system and hence exactly the same upper bound on the epidemic size at time $t$ (Corollary~\ref{paircor1}), and gives the same final epidemic size (Corollary~\ref{paircor2}).  Pairwise models are known to give good approximations of stochastic epidemic dynamics on networks in a broad range of cases (see, for example, Keeling (1999) and Sharkey (2008)). Thus the proof of equivalence when contact processes are Poisson suggests that message passing provides a good approximation as well as useful bounds. 

In subsection~\ref{large}, we derive the classic Kermack-McKendrick epidemic model as an asymptotic special case of the homogeneous message passing system (Theorem \ref{kermthm}). Notably, our derivation of such `deterministic' epidemic models from the homogeneous message passing system allows us to relate them explicitly to the stochastic model (see also, for example, Trapman (2007) and Barbour and Reinert (2013)). Thus, we are able to show that in the case where contact and recovery processes are independent and Poisson, the Kermack-McKendrick model bounds the expected epidemic size at time $t$ in the homogeneous stochastic model (Corollary~\ref{kermboundcor}). However, the bound is coarser than that provided by the homogeneous message passing system and the pairwise system, which therefore give a better approximation than the Kermack-McKendrick model. The paper ends with a brief discussion in section~\ref{disc}.  

      \section{The stochastic model (non-Markovin network-based SIR dynamics)}
      \label{second}
      
            We define a very general class of network-based stochastic epidemics which allow heterogeneous and non-Poisson individual-level processes, and heterogeneity in the initial states of individuals (including the case where the initial states of all individuals are non-random). 

Let $G=(\mathcal{V},\mathcal{E})$ be an arbitrary (possibly countably infinite) simple, undirected graph, where $\mathcal{V}$ is the set of vertices (individuals) and $\mathcal{E}$ is the set of undirected edges between vertices (throughout the paper we will use the terms `graph', `network' and `contact network' interchangeably). For $i \in \mathcal{V}$, let $\mathcal{N}_i=\{ j \in \mathcal{V} : (i,j) \in \mathcal{E} \}$ be the set of neighbours of $i$ and let $|\mathcal{N}_i|<\infty$. We assume that two individuals are neighbours if and only if at least one can make direct contacts to the other. A particular realisation of the stochastic model is specified as follows.  Each individual/vertex $i \in \mathcal{V}$ is assigned a set of numbers $\mathcal{X}_i$ relevant to the behaviour of $i$ and the spread of the epidemic:
       $$\mathcal{X}_i = \{ Y_i, \mu_i, \omega_{ji}\,(j \in \mathcal{N}_i) \} ,$$
       where $Y_i$ is equal to 1, 2, or 3, according to whether $i$ is instantaneously infected at $t=0$, initially susceptible or initially recovered/vaccinated, these being mutually exclusive; $\mu_i \in [0,\infty ]$ is $i$'s infectious period if $i$ is ever infected; $\omega_{ji} \in [0, \infty]$ is the time elapsing between $i$ first becoming infected and it making a contact to $j$, if $i$ is ever infected. Therefore, for $t \ge 0$, $i$ makes an infectious contact to $j$ at time $t$ if and only if (i) $i$ becomes infected at some time $s \le t$, (ii) $\omega_{ji}=t-s$, and (iii) $\omega_{ji}< \mu_i$. Susceptible individuals become infected as soon as they receive an infectious contact, and infected individuals immediately become recovered when their infectious period terminates (initially recovered/vaccinated individuals never become infected).  We let $\mathcal{X}= \cup_{i \in \mathcal{V}} \mathcal{X}_i$. Thus, the state of the population at time $t \in [0, \infty)$, which takes values in $\{S,I,R\}^{\mathcal{V}}$, is a function of $\mathcal{X}$.

The situation which we wish to consider is where $\mathcal{X}$ is a set of random variables, so from now on we refer to $Y_i, \mu_i, \omega_{ji}$, where $i \in \mathcal{V}, j \in \mathcal{N}_i$, as random variables. We use $r_i$ and $h_{ij}$ to denote the (marginal) probability density functions (PDFs) for $\mu_i$ and $\omega_{ij}$ respectively, and $z_i$ and $y_i$ to denote ${\rm P}(Y_i=2)$ and ${\rm P}(Y_i=3)$ respectively. Thus, ${\rm P}(Y_i=1)=1-y_i-z_i$. The probability that individual $i \in \mathcal{V}$ is in state $Z \in \{S,I,R\}$ at time $t \ge 0$ is denoted by $P_{Z_i}(t)$.

Importantly we assume that for every $i \in \mathcal{V}$,
              \begin{equation}
              	 \nonumber \mathcal{X}_i^*= \{ - \mu_i, \omega_{ji}\,(j \in \mathcal{N}_i) \}\end{equation} is a set of associated random variables, as defined by Esary et al. (1967) and discussed in this context by Donnelly (1993) and Ball et al. (2015). Additionally, we assume that the set of multivariate random variables $\{ \mathcal{X}_i: i \in \mathcal{V}\}$ is mutually independent, and that $Y_i$ and $\mathcal{X}_i^*$ are independent for all $i \in \mathcal{V}$. 
              	 A finite set of random variables, $T_1, T_2,\dots,T_n$ say, is associated (or positively correlated) if
              	 \begin{equation}
              	 \label{assocrv}
              	 {\rm E}[f(T_1, T_2,\dots,T_n)g(T_1, T_2,\dots,T_n)] \ge  {\rm E}[f(T_1, T_2,\dots,T_n)] {\rm E}[g(T_1, T_2,\dots,T_n)]
              	 \end{equation}
              	 for all non-decreasing real-valued functions $f, g$ for which the expectations in~\eqref{assocrv} exist.  Note that~\eqref{assocrv} implies that the correlation of any pair of these random variables is positive (i.e. $\ge 0$).  Further, if $T_1, T_2,\dots,T_n$ are mutually independent, then they are associated; see  Esary et al. (1967, Theorem 2.1).

 The above assumptions of association and independence are made so as to obtain the maximum amount of generality while the message passing and pairwise systems, which we shall define, give rigorous bounds on the expected dynamics in the stochastic model, and exact correspondence when the graph is a tree or forest.

              Our stochastic model represents a generalisation of that considered by Karrer and Newman (2010), and also generalises the model considered by Wilkinson and Sharkey (2014), which assumed that all of the elements of $\mathcal{X}$ are mutually independent. Here, we do not make this last assumption and allow all of an individual's post-infection contact times (to each of its neighbours), and the negative of its infectious period, to be positively correlated. This could capture, for example, the scenario where infected individuals tend to adopt some disease-combating behaviour, increasing the contact times to all of their neighbours and decreasing their infectious period.

 The model considered by Wilkinson and Sharkey (2014), which incorporates a directed graph, is equivalent to a special case of the above model. Directedness is still captured by the above model since, for any given $i \in \mathcal V$ and $ j \in \mathcal{N}_i$, $\omega_{ij}$ and $\omega_{ji}$ are assigned independently.

\subsection{\bf{The message passing system and its unique solution}}
\label{graphmes}

Following Wilkinson and Sharkey (2014), we apply the message passing approach of Karrer and Newman (2010) to the stochastic model defined in section \ref{second}. Recall that message passing relies on the concept of the cavity state in order to simplify calculations. An individual is placed into the cavity state by cancelling its ability to make contacts. This does not affect its own fate but it does affect the fates of others because it cannot pass on the infection.

For arbitrary $i \in \mathcal{V}$ and neighbour $j \in \mathcal{N}_i$, let $H^{i \leftarrow j}(t)$ denote the probability that $i$, when in the cavity state, does not receive an infectious contact from $j$ by time $t$. We can now write:
\begin{equation} \label{firstH}
H^{i \leftarrow j}(t)=1- \int_0^t f_{ij}(\tau) \big(1-y_j-z_j \Phi_i^j(t-\tau)  \big) \mbox{d} \tau,
\end{equation}
where $f_{ij}(\tau) \Delta \tau = h_{ij}(\tau) P(\mu_j > \tau \mid \omega_{ij}= \tau) \Delta \tau$ is the probability ($+o(\Delta \tau)$) that $j$ makes an infectious contact to $i$ during the time interval $[\tau,\tau + \Delta \tau)$ (for $\Delta \tau \to 0$), where time $\tau$ is measured from the moment $j$ becomes infected, and $\Phi_i^j(t)$ is the probability that $j$ does not receive any infectious contacts by time $t$ when $i$ and $j$ are both in the cavity state. Note that although the stochastic model considered here is more general, $H^{i \leftarrow j}(t)$ may still be expressed, as in \eqref{firstH}, similarly to equation 1 in Wilkinson and Sharkey (2014), because $\{\mathcal{X}_i: i \in \mathcal{V}\}$ is mutually independent and $Y_i$ is independent from $\mathcal{X}_i^*$ for all $i \in \mathcal{V}$. 

To obtain a solvable system, the probability $H^{i \leftarrow j}(t)$ is approximated by $F^{i \leftarrow j}(t)$, where $F^{i \leftarrow j}(t) \, \, (i \in \mathcal{V}, j \in \mathcal{N}_i)$ satisfies
\begin{equation} \label{Fgen1}
F^{i \leftarrow j}(t) = 1- \int_0^t f_{ij} (\tau) \Big(1-y_j-z_j \prod_{k \in \mathcal{N}_j \setminus i} F^{j \leftarrow k}(t-\tau)  \Big) \mbox{d} \tau.
\end{equation}
Any solution of (\ref{Fgen1}) which gives $F^{i \leftarrow j}(t) \in [0,1]$ for all $t \ge 0$, and all $i \in \mathcal{V} , j \in \mathcal{N}_i$, is called feasible. It was shown by Wilkinson and Sharkey (2014), following Karrer and Newman (2010), that a feasible solution exists as the limit of an iterative procedure.

The message passing system can now be defined (for $ i \in \mathcal{V}$):
\begin{eqnarray} \label{hom1}
S_{\text{mes}}^{(i)}(t) & = & z_i \prod_{j \in \mathcal{N}_i} F^{i \leftarrow j}(t), \\
I_{\text{mes}}^{(i)}(t) & =& 1 - S_{\text{mes}}^{(i)}(t) - R_{\text{mes}}^{(i)}(t), \\ \label{Rm}
R_{\text{mes}}^{(i)}(t) & = & y_i + \int_0^t r_i(\tau)[1-y_i-S_{\text{mes}}^{(i)}(t-\tau)] \mbox{d} \tau,
\end{eqnarray}
where the variables on the left-hand side approximate $P_{S_i}(t)$, $P_{I_i}(t)$ and $P_{R_i}(t)$ respectively (recall that $P_{S_i}(t)$, $P_{I_i}(t)$ and $P_{R_i}(t)$ are respectively the probability that individual $i$ is susceptible, infective and recovered-or-vaccinated at time $t$). Numerical evidence for the effectiveness of the message passing system, in capturing the expected dynamics of the stochastic model, can be seen in Figures 1 and 2 of Wilkinson and Sharkey (2014).

Note that the dimension of the message passing system \eqref{Fgen1}-\eqref{Rm} is appreciably smaller than that of the Kolomogorov forward equations for the case where the dynamics are Markovian. Suppose that $|\mathcal{V}|=N$. Then the forward equations have dimension $3^N$ and the message passing system has dimension at most $N(N-1)+3N$. In many cases, symmetries can be exploited to reduce the dimension of both the forward equations, see e.g. Simon et al. (2011), and the message passing system. However, the message passing system is still typically much smaller and can be very small, as in the model studied in section~\ref{symsec}. 

\begin{theorem}[Uniqueness of the feasible solution of the message passing system]
	\label{th1}
	Assume that
	$$ \sup_{i \in \mathcal{V}} |\mathcal{N}_i| < \infty \quad \mbox{and} \quad \sup_{(i,j) \in \mathcal{E}} \left(  \sup_{\tau \ge 0}f_{ij}(\tau) \right)  < \infty . $$
	Then there is a unique feasible solution of equations \rm(\ref{Fgen1})\it-\rm(\ref{Rm})\it and the feasible $F^{i \leftarrow j}(t)$ are continuous and non-increasing for all $i \in \mathcal{V}, j \in \mathcal{N}_i$.
\end{theorem}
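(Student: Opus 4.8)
The plan is to treat existence as already established (a feasible solution arises as the limit of the monotone iteration of Karrer--Newman, as noted above) and to prove continuity, uniqueness and monotonicity in turn. The unifying observation is that, writing $\psi^{i\leftarrow j}(\tau):=1-y_j-z_j\prod_{k\in\mathcal{N}_j\setminus i}F^{j\leftarrow k}(\tau)$ and extending $f_{ij}$ and $\psi^{i\leftarrow j}$ by $0$ to the negative half-line, equation~\eqref{Fgen1} takes the convolution form
\[
1-F^{i\leftarrow j}(t)=\int_0^t f_{ij}(\tau)\,\psi^{i\leftarrow j}(t-\tau)\,\mathrm{d}\tau=(f_{ij}*\psi^{i\leftarrow j})(t).
\]
Since any feasible solution must be measurable for the integral in~\eqref{Fgen1} to be defined, $\psi^{i\leftarrow j}$ is bounded (it lies in $[0,1]$) and measurable, while $f_{ij}\in L^1$ with $\|f_{ij}\|_1\le 1$ because $f_{ij}(\tau)\le h_{ij}(\tau)$.

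For continuity I would invoke the standard fact that the convolution of an $L^1$ function with a bounded measurable function is (uniformly) continuous; hence each $1-F^{i\leftarrow j}$, and so each $F^{i\leftarrow j}$, is continuous. This argument uses only integrability of $f_{ij}$ and boundedness of the $F$'s, so it applies to \emph{every} feasible solution, and in particular secures continuity before uniqueness is addressed.

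For uniqueness, let $F$ and $G$ be two feasible solutions and subtract their equations. Using the elementary bound $\big|\prod_k a_k-\prod_k b_k\big|\le\sum_k|a_k-b_k|$ valid for $a_k,b_k\in[0,1]$, together with $z_j\le 1$, $f_{ij}(\tau)\le C:=\sup_{(i,j)\in\mathcal{E}}\sup_{\tau\ge 0}f_{ij}(\tau)$ and $|\mathcal{N}_j\setminus i|\le D-1$ where $D:=\sup_i|\mathcal{N}_i|$, I would obtain the \emph{uniform} estimate
\[
\big|F^{i\leftarrow j}(t)-G^{i\leftarrow j}(t)\big|\le C\,(D-1)\int_0^t\delta(s)\,\mathrm{d}s,\qquad \delta(s):=\sup_{i,\,j\in\mathcal{N}_i}\big|F^{i\leftarrow j}(s)-G^{i\leftarrow j}(s)\big|.
\]
Because the right-hand side is independent of the pair $(i,j)$, taking the supremum yields $\delta(t)\le C(D-1)\int_0^t\delta(s)\,\mathrm{d}s$; as $\delta$ is bounded by $1$ and measurable (a countable supremum of the continuous functions just obtained), Gr\"onwall's inequality forces $\delta\equiv 0$. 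This is precisely where both hypotheses are essential and where I expect the main obstacle to lie: they make $C(D-1)$ a single finite constant valid simultaneously over the whole, possibly infinite, graph, so that the Gr\"onwall estimate closes uniformly; without such uniform control the per-vertex Lipschitz constants could be unbounded and the global argument would fail.

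Finally, for monotonicity I would return to the monotone iteration $F_0^{i\leftarrow j}\equiv 1$, $F_{n+1}=$ (right-hand side of~\eqref{Fgen1} evaluated at $F_n$), and show by induction that every $F_n^{i\leftarrow j}$ is non-increasing. If the $F_n^{j\leftarrow k}$ are non-negative and non-increasing, then so is each product $\prod_{k}F_n^{j\leftarrow k}$, whence $\psi_n^{i\leftarrow j}$ is non-decreasing; splitting $[1-F_{n+1}^{i\leftarrow j}(t')]-[1-F_{n+1}^{i\leftarrow j}(t)]$ for $t'>t$ into the integral over $[t,t']$ (non-negative, since $f_{ij},\psi_n^{i\leftarrow j}\ge 0$) plus $\int_0^t f_{ij}(\tau)\big[\psi_n^{i\leftarrow j}(t'-\tau)-\psi_n^{i\leftarrow j}(t-\tau)\big]\,\mathrm{d}\tau$ (non-negative, as $\psi_n^{i\leftarrow j}$ is non-decreasing) shows that $1-F_{n+1}^{i\leftarrow j}$ is non-decreasing. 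Passing to the pointwise limit preserves monotonicity, and by the uniqueness just proved this limit is the unique feasible solution; combined with the continuity established above, this completes the proof.
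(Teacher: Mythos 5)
Your proposal is correct and its overall architecture matches the paper's: existence via the monotone Karrer--Newman iteration, uniqueness via a contraction estimate whose constant is finite uniformly over the graph precisely because of the two hypotheses, and monotonicity by induction along the iterates followed by passage to the limit. Your Gr\"onwall step is the same argument as the paper's bound~\eqref{unique} in disguise: iterating $\delta(t)\le C(D-1)\int_0^t\delta(s)\,\mathrm{d}s$ against $\delta\le 1$ reproduces exactly the factorial decay $(N_{\text{max}}-1)^m(tf_{\text{max}})^{m+1}/(m+1)!$ that the paper obtains by arguing as in Corduneanu. The one genuinely different component is continuity. The paper shows by induction that each iterate $F_{(m)}^{i\leftarrow j}$ in~\eqref{iter} is continuous and then upgrades the bound~\eqref{unique} to uniform convergence on compacts, so continuity is inherited by the limit; you instead observe that \emph{any} feasible solution satisfies $1-F^{i\leftarrow j}=f_{ij}*\psi^{i\leftarrow j}$ with $f_{ij}\in L^1$ (since $f_{ij}\le h_{ij}$) and $\psi^{i\leftarrow j}$ bounded and measurable, and invoke continuity of $L^1*L^\infty$ convolutions. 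Your route is slightly more economical (it needs neither the finiteness hypotheses nor the iteration, and it establishes continuity of every feasible solution directly rather than only of the limit), whereas the paper's route has the side benefit that the explicit uniform rate in~\eqref{unique} is already in hand and does double duty. Two minor points worth making explicit in a write-up: the induction for monotonicity should also record $0\le F_{(m)}^{i\leftarrow j}\le 1$ (which follows from $0\le\psi^{i\leftarrow j}\le 1$ and $\int_0^\infty f_{ij}\le 1$) so that the product step and the bound $\delta\le 1$ are justified, and the measurability of $\delta$ as a countable supremum uses that $\mathcal{V}$ is at most countable, which the paper's setup guarantees.
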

\begin{proof}
	See appendix~\ref{feasunique}. \qed
\end{proof}

It was shown by Wilkinson and Sharkey (2014) that when the graph is finite and $f_{ij}(\tau)=T_{ij} {\rm e}^{- T_{ij} \tau} \int_{\tau}^{\infty} r_j(\tau') \mbox{d} \tau' \, \, (i \in \mathcal{V}, j \in \mathcal{N}_i)$, where $T_{ij} \in (0, \infty)$, i.e. contact processes are Poisson and independent of recovery processes, then the message passing system \eqref{Fgen1}-\eqref{Rm} is equivalent to an individual-level pairwise system of integro-differential equations. It now follows that this pairwise system of equations also has a unique feasible solution.

The message passing system \eqref{Fgen1}-\eqref{Rm}, which coincides with that given in Wilkinson and Sharkey (2014) although the underlying model here is more general, differs from the message passing system in Karrer and Newman (2010) in that the probability an individual is initially infected need not be the same for all individuals, and individuals may be initially recovered or vaccinated. The system \eqref{Fgen1}-\eqref{Rm} also accounts for heterogeneity in the recovery and contact processes. A key use of message passing equations is that they yield a rigorous upper bound for the mean spread in the underlying stochastic epidemic. In the next subsection, we show that this property extends to our more general  model. 
	
	\subsection{\bf{Bounding the expected epidemic size at time $\boldsymbol{t}$}}
	\label{boundsec}
	
For $t \ge 0$, let $X(t)$ denote the number of susceptibles at time $t$. Thus, $X(0)-X(t)$ is the total number of individuals infected by time $t$ not counting those infected at $t=0$. We refer to this quantity as the epidemic size at time $t$. 

	\begin{theorem}[Message passing bounds the marginal distribution for the health status of an individual] \label{ineqtheorem} \newline
For all $t \ge 0$ and all $i \in \mathcal{V}$,
  \begin{eqnarray}  \label{ineq1}
	P_{S_i}(t) &\ge & S_{\text{mes}}^{(i)}(t), \\ \label{ineq2}
	P_{R_i}(t) &\le & R_{\text{mes}}^{(i)}(t) ,
	\end{eqnarray}
  with equality if $G$ is a tree or forest.
\label{boundthm}
	\end{theorem}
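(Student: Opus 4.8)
The plan is to route the argument through the exact cavity probabilities $H^{i\leftarrow j}(t)$ of \eqref{firstH}, showing first that they dominate the message passing functions $F^{i\leftarrow j}(t)$ and then transferring this comparison onto the marginal probabilities. The decisive ingredient, and the place where the associatedness of $\mathcal{X}_i^*$ earns its keep, is the inequality
\begin{equation}\nonumber \Phi_i^j(t) \ge \prod_{k \in \mathcal{N}_j \setminus i} H^{j\leftarrow k}(t),\end{equation}
with equality when $G$ is a tree or forest. To prove it I would express the event ``$j$, with both $i$ and $j$ in the cavity state, receives no infectious contact from $k$ by time $t$'' as a non-increasing function of the collection $\{\mathcal{X}_\ell^* : \ell \in \mathcal{V}\}$ together with the initial-state indicators built from the $Y_\ell$. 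Since these variables are associated within each individual and independent across individuals, the whole collection is associated (Esary et al. 1967, Thm.\ 2.1, via \eqref{assocrv}), so the probability of the intersection of these decreasing events is at least the product of their probabilities. A separate monotonicity observation — that adjoining the extra vertex $i$ to the cavity set can only raise each such ``no contact'' probability — upgrades the product of the two-vertex-cavity marginals to $\prod_k H^{j\leftarrow k}(t)$. On a tree the subgraphs seen from $j$ through distinct neighbours are disjoint once $j$ is in the cavity state, so the events are genuinely independent and the cavity status of $i$ is irrelevant to them; hence equality.

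Substituting this inequality into the exact identity \eqref{firstH}, and using that $H^{i\leftarrow j}(t)$ is non-decreasing in $\Phi_i^j$ (as $z_j\ge 0$), shows that $H$ is a supersolution of the message passing map: writing $\mathcal{F}$ for the right-hand side operator of \eqref{Fgen1}, one obtains $H \ge \mathcal{F}[H]$ pointwise. Because $\mathcal{F}$ is monotone (a larger argument makes each product larger, the integrand smaller, and the whole expression larger) and maps $[0,1]$-valued functions into $[0,1]$-valued functions, the iterates $\mathcal{F}^n[H]$ decrease and are bounded below, hence converge pointwise; dominated convergence (the kernels $f_{ij}$ being bounded and integrable on compacts) identifies the limit as a feasible fixed point, which by Theorem~\ref{th1} must be $F$ itself. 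Therefore $H^{i\leftarrow j}(t) \ge F^{i\leftarrow j}(t)$ for all $t$, with equality on a tree or forest, since there the displayed inequality is an equality, making $H$ already a feasible solution of \eqref{Fgen1} and forcing $H=F$ by uniqueness.

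For \eqref{ineq1} I would use that the event ``$i$ is susceptible at $t$'' is unchanged by placing $i$ in the cavity state — a susceptible individual never transmits, so cancelling its out-contacts alters nothing — whence $P_{S_i}(t) = z_i\,{\rm P}\big(\bigcap_{j \in \mathcal{N}_i} B_j\big)$, where $B_j$ is the event that $i$ (in the cavity state) receives no contact from $j$ by $t$ and ${\rm P}(B_j)=H^{i\leftarrow j}(t)$. The same association argument gives ${\rm P}(\bigcap_j B_j) \ge \prod_j H^{i\leftarrow j}(t)$, and combining with $H\ge F$ yields $P_{S_i}(t) \ge z_i \prod_j F^{i\leftarrow j}(t) = S_{\text{mes}}^{(i)}(t)$; on a tree the $B_j$ are independent and both inequalities become equalities. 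Finally, \eqref{ineq2} follows once I verify that the true model obeys the exact renewal identity $P_{R_i}(t) = y_i + \int_0^t r_i(\tau)[1-y_i-P_{S_i}(t-\tau)]\,\mbox{d}\tau$, which has the same form as \eqref{Rm}: this holds because $i$'s first-infection time is independent of its own infectious period $\mu_i$ (the former depends only on the other individuals' variables, which are independent of $\mathcal{X}_i$, and $Y_i$ is independent of $\mathcal{X}_i^*$), so a Fubini rearrangement matches the two expressions. Substituting $P_{S_i} \ge S_{\text{mes}}^{(i)}$ then gives $P_{R_i}(t) \le R_{\text{mes}}^{(i)}(t)$, with equality on a tree.

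The main obstacle is the key inequality of the first paragraph. Rendering ``no infectious contact'' as a bona fide monotone function of the primitive random variables requires care about the dynamic, path-dependent way infection propagates, and the cavity-monotonicity step — that enlarging the cavity set cannot decrease a no-contact probability — is cleanest to justify by an explicit monotone coupling of the two epidemics on the same realisation of $\mathcal{X}$. Everything downstream (the supersolution and monotone-iteration step, and the two renewal identities) is routine once this lemma and Theorem~\ref{th1} are in hand.
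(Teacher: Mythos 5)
Your proposal follows essentially the same route as the paper's proof in Appendix B: the association inequality giving $\Phi_i^j(t) \ge \prod_{k \in \mathcal{N}_j \setminus i} H^{j\leftarrow k}(t)$ together with the cavity-monotonicity observation, the resulting supersolution property of $H$ and a monotone iteration (the paper restarts its Appendix A iteration from $F^{i\leftarrow j}_{(0)}=H^{i\leftarrow j}$) identifying $F^{i\leftarrow j}\le H^{i\leftarrow j}$ with equality on trees, and the renewal identity transferring the bound from $P_{S_i}$ to $P_{R_i}$. The only substantive point the paper treats that you omit is the countably infinite vertex set, which it handles by truncating $G$ to finite subgraphs $G^{(n)}$ and passing to the limit, since the association theory of Esary et al.\ applies only to finite collections of random variables.
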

	\begin{proof}
	In the case where $\mathcal{X}$ is mutually independent and $\mathcal{V}$ is finite, this is proved by Wilkinson and Sharkey (2014) and Ball et al. (2015) by generalising Karrer and Newman (2010). The proof for our current more general model is in appendix~\ref{boundappend}.\qed
\end{proof}

For $t \ge 0$, let $Z(t)$ denote the number of recovered-or-vaccinated individuals at time $t$.  The following corollary follows immediately from Theorem~\ref{ineqtheorem} on noting that, for $t \ge 0$, 
\begin{equation*}
{\rm E}[X(t)]=\sum_i P_{S_i}(t) \qquad \mbox{and} \qquad {\rm E}[Z(t)]=\sum_i P_{R_i}(t).
\end{equation*} 

\begin{corollary}
	\label{boundcor}
For all $t \ge 0$, we have ${\rm E}[X(t)] \ge \sum_{i}S^{(i)}_{\text{mes}}(t)$ and ${\rm E}[Z(t)] \le \sum_{i}R^{(i)}_{\text{mes}}(t)$, with equality occurring when the graph is a tree or forest. The expected epidemic size at time $t$ is given by $E[X(0)-X(t)]=\sum_{i \in \mathcal{V}}z_i - E[X(t)]$. Thus, since we have a lower bound on $E[X(t)]$ we also have an upper bound on the expected epidemic size at time $t$.
\end{corollary}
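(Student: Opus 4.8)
The plan is to derive the corollary directly from Theorem~\ref{ineqtheorem} by summing the per-individual inequalities over all vertices, the only point requiring genuine care being the interchange of expectation and summation when $\mathcal{V}$ is countably infinite. First I would express the population counts as sums of indicator random variables: for each $i \in \mathcal{V}$ and $t \ge 0$, writing $\mathbbm{1}[S_i(t)]$ and $\mathbbm{1}[R_i(t)]$ for the indicators that $i$ is susceptible, respectively recovered-or-vaccinated, at time $t$, we have $X(t) = \sum_{i \in \mathcal{V}} \mathbbm{1}[S_i(t)]$ and $Z(t) = \sum_{i \in \mathcal{V}} \mathbbm{1}[R_i(t)]$. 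Since these summands are non-negative, Tonelli's theorem permits exchanging expectation and summation even for an infinite graph (the sums taking values in $[0,\infty]$), yielding ${\rm E}[X(t)] = \sum_i P_{S_i}(t)$ and ${\rm E}[Z(t)] = \sum_i P_{R_i}(t)$, exactly the two identities stated in the display preceding the corollary.

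Next I would invoke Theorem~\ref{ineqtheorem}, which gives $P_{S_i}(t) \ge S_{\text{mes}}^{(i)}(t)$ and $P_{R_i}(t) \le R_{\text{mes}}^{(i)}(t)$ for every $i$ and every $t \ge 0$. Summing these over $i$ preserves the inequalities termwise, so ${\rm E}[X(t)] \ge \sum_i S_{\text{mes}}^{(i)}(t)$ and ${\rm E}[Z(t)] \le \sum_i R_{\text{mes}}^{(i)}(t)$. When $G$ is a tree or forest the per-individual relations of Theorem~\ref{ineqtheorem} hold with equality, and summing equalities gives equalities, establishing the first two assertions.

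For the epidemic size I would observe that at $t=0$ individual $i$ is susceptible precisely when $Y_i = 2$, so $P_{S_i}(0) = z_i$ and hence ${\rm E}[X(0)] = \sum_i z_i$. Because the SIR dynamics never return an individual to the susceptible class, $\mathbbm{1}[S_i(0)] - \mathbbm{1}[S_i(t)] \ge 0$ for each $i$, so $X(0)-X(t) = \sum_i(\mathbbm{1}[S_i(0)] - \mathbbm{1}[S_i(t)])$ is a sum of non-negative terms, and Tonelli again yields ${\rm E}[X(0)-X(t)] = \sum_i z_i - {\rm E}[X(t)]$. Combining this with the lower bound on ${\rm E}[X(t)]$ from the previous step produces the advertised upper bound ${\rm E}[X(0)-X(t)] \le \sum_i\big(z_i - S_{\text{mes}}^{(i)}(t)\big)$, each summand being non-negative since $S_{\text{mes}}^{(i)}(t) = z_i \prod_{j} F^{i \leftarrow j}(t) \le z_i$.

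The step I expect to demand the most attention---though it is scarcely an obstacle---is the handling of possibly infinite sums in the extended reals on an infinite graph. Writing ${\rm E}[X(0)-X(t)]$ as the expectation of the manifestly non-negative telescoping sum $\sum_i(\mathbbm{1}[S_i(0)] - \mathbbm{1}[S_i(t)])$, rather than as a difference $\sum_i z_i - {\rm E}[X(t)]$, sidesteps any ill-defined subtraction of infinities in the case $\sum_i z_i = \infty$, so that the bound remains meaningful in full generality.
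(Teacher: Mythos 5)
Your proposal is correct and follows essentially the same route as the paper: the corollary is obtained by writing ${\rm E}[X(t)]=\sum_i P_{S_i}(t)$ and ${\rm E}[Z(t)]=\sum_i P_{R_i}(t)$ and summing the per-individual inequalities of Theorem~\ref{ineqtheorem}. Your additional care with Tonelli's theorem and with avoiding ill-defined subtraction when $\sum_i z_i=\infty$ is a sensible refinement for the countably infinite case, but it does not change the substance of the argument.
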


\section{The homogeneous stochastic model}

\label{symsec}
In this section we consider a special case of the stochastic model, and we refer to this special case as `the homogeneous stochastic model'. In the homogeneous stochastic model, the graph is symmetric and connected. Examples of symmetric connected graphs include complete graphs, ring lattices, infinite square lattices and Bethe lattices. In a symmetric graph, each individual has the same (finite) number $n$ of neighbours, and we say that the graph is $n$-regular. To avoid triviality we assume $n \ge 2$.
\begin{definition}
 A graph $G=(\mathcal{V},\mathcal{E})$ is called symmetric if it is arc-transitive; i.e. for any two ordered pairs of neighbours $i,j$, and $i',j'$, there exists a graph-automorphism which maps $i$ to $i'$ and $j$ to $j'$ (Godsil and Royle 2001).\label{defsym}\end{definition}

Additionally, in the homogeneous stochastic model, the joint distribution of $(Y_i, \mu_i , \omega_{j i}(j \in \mathcal{N}_i))$ is symmetric in its last $n$ arguments and is the same for all $i \in \mathcal{V}$. Thus, it is impossible to distinguish between any two individuals by their behaviour or by their position in the graph. Note that we have not precluded the variables in $\mathcal{X}_i^*$ from being non-trivially associated (for all $i \in \mathcal{V}$), i.e. $i$'s infectious period and the time it takes for it to contact each of its neighbours, after infection, may all be non-trivially correlated.

 We use $r$ and $h$ to denote the (marginal) PDFs for $\mu_i$ and $\omega_{ij}$ respectively, and $z$ and $y$ to denote ${\rm P}(Y_i=2)$ and ${\rm P}(Y_i=3)$ respectively. Thus, ${\rm P}(Y_i=1)=1-y-z$. To avoid triviality, we assume that $0<z<1$ and $0 \le y<1-z$.

  Owing to symmetry (in this special case), the probability distribution for the health status of an individual is the same for all individuals, i.e. for all $i,i' \in \mathcal{V}$ and all $t \ge 0$, we have $P_{S_i}(t)=P_{S_{i'}}(t), P_{I_i}(t)=P_{I_{i'}}(t)$ and $P_{R_i}(t)=P_{R_{i'}}(t)$ (let $P_S(t), P_I(t)$ and $P_R(t)$ denote these quantities). Similarly, for all $i \in \mathcal{V}, j \in \mathcal{N}_i$ and all $i' \in \mathcal{V}, j' \in \mathcal{N}_{i'}$, and all $t \ge 0$, we have $H^{i \leftarrow j}(t)= H^{i' \leftarrow j'}(t)$ (let $H_{\text{sym}}(t)$ denote this quantity).

\subsection{\bf{The homogeneous message passing system}}
\label{messagesym}
 For the homogeneous stochastic model, (\ref{Fgen1}) becomes
\begin{equation} \label{otherF}
F^{i \leftarrow j}(t) = 1- \int_0^t f(\tau) \Big(1-y-z \prod_{k \in \mathcal{N}_j \setminus i} F^{j \leftarrow k}(t-\tau)  \Big) \mbox{d} \tau \quad (i \in \mathcal{V}, j \in \mathcal{N}_i),
\end{equation}
where we have used $f_{ij}(\tau)=f_{i'j'}(\tau)$ for all $i \in \mathcal{V}, j \in \mathcal{N}_i$, and all $i' \in \mathcal{V}, j \in \mathcal{N}_{i'}$, and all $\tau \ge 0$, and we let $f(\tau)$ denote this quantity.

The arc-transitivity of symmetric graphs and the symmetry in (\ref{otherF}) allow us to simplify (\ref{Fgen1})-(\ref{Rm}), and to write down the full homogeneous message passing system as:
       \begin{eqnarray}
       \label{hom3}
       S_{\text{mes}}(t)&=&z F_{\text{sym}}(t)^{n}, \\ \label{Ipair}
       I_{\text{mes}}(t)&=& 1-S_{\text{mes}}(t)-R_{\text{mes}}(t), \\ \label{Rpair}
       R_{\text{mes}}(t)&=& y + \int_0^t r(\tau)[1-y-S_{\text{mes}}(t-\tau)] \mbox{d} \tau,
       \label{hom}
       \end{eqnarray}
       where
                            \begin{eqnarray} \label{Fgraph}
                              F_{\text{sym}}(t)& =& 1 - \int_0^t f(\tau) \left[ 1 - y -   zF_{\text{sym}}(t-\tau)^{n-1}    \right] \mbox{d} \tau.
                            \end{eqnarray}
                             In deriving these equations, we have used
                             $F^{i \leftarrow j}(t)= F^{i' \leftarrow j'}(t)$ for all $i \in \mathcal{V}, j \in \mathcal{N}_i$ and all $i' \in \mathcal{V}, j' \in \mathcal{N}_{i'}$, and all $t \ge 0$, and we let $ F_{\text{sym}}(t)$ denote this quantity. Note that we have also made use of the fact that every individual has $n$ neighbours. This system is identical in form (when vaccination is disallowed) to the message passing system for the configuration network model provided by Karrer and Newman (2010, equations 26 and 27, making use of equations 1, 4 and 5), in the case where every individual has $n$ neighbours with probability 1. From Theorem~\ref{th1}, we know that if $\sup_{\tau \ge 0} f(\tau) < \infty$ then (\ref{Fgraph}) has a unique feasible solution.
                             
 For clarity we write out these equations for the simplifying cases of Poisson transmission and recovery processes, and Poisson transmission and fixed (non-random) recovery.                         \newline 
                             
                              \noindent {\bf Example 1: Poisson transmission and recovery}
                             
                                                                             For independent Poisson transmission and recovery processes (specifically, $\tau_i$ and $\omega_{ji}$ are independent and exponentially distributed with rates $\gamma$ and $\beta$ respectively), with $f(\tau)=\beta {\rm e}^{-(\beta + \gamma) \tau}$, the homogeneous message passing system can be solved via the following ordinary differential equations (ODEs):
                                                                             \begin{eqnarray} \label{case1a}
                                                                             \dot{F}_{\text{sym}}(t)&=& \gamma \Big(1-F_{\text{sym}}(t) \Big) - \beta \Big(F_{\text{sym}}(t)  - y - z F_{\text{sym}}(t)^{n-1}  \Big), \\ \label{case1b}
                                                                             \dot{R}_{\text{mes}}(t) &=& \gamma I_{\text{mes}}(t),
                                                                             \end{eqnarray}
                                                                            with $S_{\text{mes}}(t)$ and $I_{\text{mes}}(t)$ given by (\ref{hom3}) and (\ref{Ipair}). \newline

                              \noindent {\bf Example 2: Poisson transmission and fixed recovery}
                             
                              For Poisson transmission processes and a fixed recovery period (specifically, $\tau_i$ is non-random with value $R \in [0, \infty]$ and $\omega_{ji}$ is exponentially distributed with rate $\beta$), with $f(\tau)=\beta {\rm e}^{- \beta \tau}(1- \theta(t-R))$ where $\theta$ is the Heaviside step function, the homogeneous message passing system can be solved using the following delay differential equation:
                              \begin{eqnarray} \nonumber \label{case2a}
                              \dot{F}_{\text{sym}}(t)&=& - \beta \Big(F_{\text{sym}}(t) - y - zF_{\text{sym}}(t)^{n-1}  \\
                              &&- \theta(t-R) {\rm e}^{- \beta R} \big( 1 - y - z F_{\text{sym}}(t-R)^{n-1}  \big) \Big) ,  \end{eqnarray}
                             with
                             \begin{eqnarray}  \label{case2b}
                              R_{\text{mes}}(t) &=& y + \theta(t-R) \big( 1 - y - S_{\text{mes}}(t-R) \big),
                              \end{eqnarray}
                              and $S_{\text{mes}}(t)$ and $I_{\text{mes}}(t)$ given by (\ref{hom3}) and (\ref{Ipair}). \newline
                             
                             Other choices of $f(\tau)$ exist which allow the message passing system to be solved via (non-integro) differential equations, such as the top hat function (Karrer and Newman 2010, equation 33).

                             \subsection{\bf{Epidemiological results}}
                             \label{epiresults}
                        
                        As well as bounding/approximating (or correctly computing in the case of an infinite regular tree) the expected fractional epidemic size at time $t \ge 0$, the homogeneous message passing system generates other epidemiologically relevant results for the stochastic model, as demonstrated here.
                        
                        \begin{theorem}[Cycles in the network inhibit the stochastic epidemic] \label{cyc}
                        	\label{theorem3}
                        	Suppose that $\sup_{\tau \ge 0}f(\tau) < \infty$. The probability of an arbitrary individual being susceptible at a given time, for the $n$-regular Bethe lattice \rm(\it infinite tree\rm)\it, is less than or equal to this quantity for all other $n$-regular symmetric graphs \rm(\it where the homogeneous stochastic model is otherwise unchanged\rm)\it. The same holds for the probability of an arbitrary individual being recovered except with the inequality reversed.
                        \end{theorem}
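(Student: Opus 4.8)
The plan is to exploit the fact that the homogeneous message passing system depends only on the \emph{local} data of the graph, so that it returns one and the same prediction for every $n$-regular symmetric graph, while Theorem~\ref{ineqtheorem} tells us this common prediction is attained exactly on the infinite tree and is a one-sided bound on any graph containing cycles.

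First I would observe that the defining equation~\eqref{Fgraph}, and hence also~\eqref{hom3} and~\eqref{Rpair}, involve \emph{only} the degree $n$ together with $f$, $y$ and $z$; no other feature of the graph appears. Because every $n$-regular symmetric graph is arc-transitive (Definition~\ref{defsym}), the reduction of the full system~\eqref{Fgen1}--\eqref{Rm} to the four homogeneous equations~\eqref{hom3}--\eqref{Fgraph} is valid on each such graph, and under the hypothesis $\sup_{\tau \ge 0} f(\tau) < \infty$ Theorem~\ref{th1} guarantees that~\eqref{Fgraph} has a \emph{unique} feasible solution $F_{\text{sym}}$. Consequently $F_{\text{sym}}$, and therefore $S_{\text{mes}}(t) = z F_{\text{sym}}(t)^n$ and $R_{\text{mes}}(t)$, are one and the same across all $n$-regular symmetric graphs carrying the prescribed $f, y, z$; in particular they coincide with the corresponding quantities computed for the $n$-regular Bethe lattice.

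Next I would invoke Theorem~\ref{ineqtheorem}. The Bethe lattice is an infinite tree, so there the message passing system is exact: writing $P_S^{\mathrm{Bethe}}$ and $P_R^{\mathrm{Bethe}}$ for its (vertex-independent, by arc-transitivity) marginal susceptibility and recovery probabilities, the equality case gives $P_S^{\mathrm{Bethe}}(t) = S_{\text{mes}}(t)$ and $P_R^{\mathrm{Bethe}}(t) = R_{\text{mes}}(t)$. For any other $n$-regular symmetric graph $G$ the inequality case of the same theorem yields $P_S^{G}(t) \ge S_{\text{mes}}(t)$ and $P_R^{G}(t) \le R_{\text{mes}}(t)$, where by the previous paragraph $S_{\text{mes}}$ and $R_{\text{mes}}$ denote exactly the same functions as on the Bethe lattice. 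Chaining equality with inequality then gives $P_S^{\mathrm{Bethe}}(t) = S_{\text{mes}}(t) \le P_S^{G}(t)$ and $P_R^{\mathrm{Bethe}}(t) = R_{\text{mes}}(t) \ge P_R^{G}(t)$, which is precisely the assertion.

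The conceptual crux --- and the only step requiring genuine care rather than routine verification --- is the graph-independence claimed in the second paragraph. One must check that the passage to~\eqref{Fgraph} holds verbatim on every arc-transitive $n$-regular graph, so that the \emph{same} integral equation, and hence by uniqueness the \emph{same} solution $F_{\text{sym}}$, arises in each case, and that the Bethe lattice indeed qualifies as such a graph and supports a well-posed instance of the homogeneous stochastic model. The hypothesis $\sup_{\tau \ge 0} f(\tau) < \infty$ serves precisely to supply, through Theorem~\ref{th1}, the uniqueness that licenses the identification of these solutions across different graphs; without it one could not meaningfully speak of ``the'' common message passing prediction.
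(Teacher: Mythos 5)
Your proposal is correct and takes essentially the same route as the paper's own (much terser) proof: both rest on the observations that the homogeneous message passing system depends only on $n$, $f$, $y$ and $z$ and hence yields the same prediction for every $n$-regular symmetric graph, that Theorem~\ref{ineqtheorem} makes this prediction exact on the Bethe lattice, and that it is a one-sided bound on any other such graph. Your write-up merely makes explicit the graph-independence and the role of $\sup_{\tau\ge 0} f(\tau)<\infty$ (via the uniqueness in Theorem~\ref{th1}), which the paper leaves implicit.
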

                        
                        \begin{proof}
                        	From Theorem~\ref{ineqtheorem}, we know that system (\ref{hom3})-(\ref{Fgraph}) cannot overestimate the probability of an arbitrary individual being susceptible at time $t$ and cannot underestimate the probability of an arbitrary individual being recovered at time $t$. However, also from Theorem \ref{ineqtheorem}, the system is exact if the graph is a tree.  \qed
                        \end{proof}

                        Theorem \ref{theorem3} suggests that, all other things being equal, an infection will have the greatest impact by time $t$ when the contact structure is most tree-like. Indeed, it is known that clustering and the presence of cycles in the graph may slow down and limit the spread of an infection (see Miller (2009) and references therein).

\begin{theorem}[Final epidemic size relation and sufficient conditions for no major outbreak]
	For all $t \ge 0$,
	\label{outbreakthm}
	\begin{equation} \label{finsizebound}
	S_{\text{mes}}(\infty) \le  P_{S}(t), \quad  R_{\text{mes}}(\infty) \ge  P_{R}(t),
	\end{equation}
	where $S_{\text{mes}}(\infty) \equiv \lim_{t \to \infty}S_{\text{mes}}(t)$ may be computed as the unique solution in $[0,z]$ of
	\begin{equation}  \left( \frac{ S_{\text{mes}}(\infty)}{z} \right)^{\frac{1}{n}}=1-p + py+ pz \left(\frac{ S_{\text{mes}}(\infty)}{z} \right)^{\frac{n-1}{n}} ,\label{finS} \end{equation} 
 with $p \equiv  \int_0^{\infty} f(\tau) \mbox{d} \tau$, and $R_{\text{mes}}(\infty)=1-S_{\text{mes}}(\infty)$.
 
 Further, when the fraction initially infected is small, i.e. $z \to 1 - y$ from below, then
 	\begin{equation} \label{noepidemic} P_S(\infty) = P_S(0) \qquad \mbox{if} \qquad y \ge 1 - \frac{1}{R_0} \quad \text{or} \quad R_0 \le 1, \end{equation}
 	where $R_0 \equiv (n-1)p$.	\rm(\it This means that if each individual is independently vaccinated with probability greater than or equal to $1- 1/ R_0$, or if $R_0 \le 1$, then a major outbreak of the disease is impossible.\rm)\it
\end{theorem}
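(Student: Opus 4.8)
The plan is to prove the three parts in sequence---the time-uniform bounds \eqref{finsizebound}, the fixed-point characterisation \eqref{finS} of $S_{\text{mes}}(\infty)$, and the threshold criterion---using as a common engine the monotonicity of the homogeneous message passing system together with Theorem~\ref{ineqtheorem}. First I would record monotonicity: by Theorem~\ref{th1} the feasible $F_{\text{sym}}$ is non-increasing, so $S_{\text{mes}}(t)=zF_{\text{sym}}(t)^n$ is non-increasing, and writing $R_{\text{mes}}(t)=y+(r*G)(t)$ with $G(t):=1-y-S_{\text{mes}}(t)\ge 0$ non-decreasing shows, via the monotonicity of such convolutions, that $R_{\text{mes}}$ is non-decreasing; hence the limits $F_\infty,\,S_{\text{mes}}(\infty),\,R_{\text{mes}}(\infty)$ all exist. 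The bounds \eqref{finsizebound} are then immediate from the homogeneous form of Theorem~\ref{ineqtheorem}: $S_{\text{mes}}(\infty)\le S_{\text{mes}}(t)\le P_S(t)$ and $R_{\text{mes}}(\infty)\ge R_{\text{mes}}(t)\ge P_R(t)$ for every $t\ge 0$.

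To obtain \eqref{finS} I would let $t\to\infty$ in \eqref{Fgraph}. Writing the integral as $\int_0^\infty \mathbf 1_{[0,t]}(\tau)\,f(\tau)\big(1-y-zF_{\text{sym}}(t-\tau)^{n-1}\big)\,\mathrm d\tau$, the integrand converges pointwise to $f(\tau)\big(1-y-zF_\infty^{n-1}\big)$ and is dominated by the integrable function $f$ (since $p=\int_0^\infty f\,\mathrm d\tau<\infty$), so dominated convergence gives $F_\infty=1-p+py+pzF_\infty^{n-1}$; substituting $F_\infty=(S_{\text{mes}}(\infty)/z)^{1/n}$ produces \eqref{finS}. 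Uniqueness in $[0,z]$ is equivalent to uniqueness of the root in $[0,1]$ of $\psi(u):=1-p+py+pzu^{n-1}-u$, and since $\psi$ is convex with $\psi(0)=1-p(1-y)\ge 0$ and $\psi(1)=-p(1-y-z)<0$ (the case $p=0$ being trivial), it has exactly one root there; as $F_\infty\in[0,1]$ solves the equation it must equal this root. Letting $t\to\infty$ in \eqref{Rpair} by the same dominated-convergence argument, and using $\int_0^\infty r\,\mathrm d\tau=1$, yields $R_{\text{mes}}(\infty)=1-S_{\text{mes}}(\infty)$.

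For the threshold statement I would bound the genuine final size by the message passing one. Since $P_S$ is non-increasing, $P_S(\infty):=\lim_{t\to\infty}P_S(t)$ exists, and $S_{\text{mes}}(\infty)\le P_S(\infty)$ by \eqref{finsizebound}; hence $0\le P_S(0)-P_S(\infty)=z-P_S(\infty)\le z-S_{\text{mes}}(\infty)=z\big(1-F_\infty^{\,n}\big)$. It therefore suffices to show $F_\infty=F_\infty(z)\to 1$ as $z\uparrow 1-y$ under the stated conditions. The decisive computation is $\psi'(1)=zp(n-1)-1=zR_0-1$, so at $z=1-y$ one has $\psi'(1)=(1-y)R_0-1$, and ``$R_0\le1$ or $y\ge 1-1/R_0$'' is exactly the statement $(1-y)R_0\le 1$, i.e.\ $\psi'(1)\le 0$ there. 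A compactness/continuity argument then closes the loop: any subsequential limit $u_0$ of $F_\infty(z_k)$ as $z_k\uparrow 1-y$ satisfies $\psi(u_0;\,z=1-y)=0$, but convexity with $\psi(1)=0$ and $\psi'(1)\le 0$ forces $\psi(\cdot\,;\,z=1-y)>0$ on $[0,1)$, so $u_0=1$. Thus $F_\infty(z)\to 1$, whence $z-S_{\text{mes}}(\infty)=z(1-F_\infty^n)\to 0$ and $P_S(\infty)\to P_S(0)$.

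I expect the main obstacle to be this last step: identifying which root of \eqref{finS} the message passing limit selects and proving it migrates to $1$ precisely at the threshold. This is where $R_0=(n-1)p$ is forced on us through $\psi'(1)=zR_0-1$, and the convexity bookkeeping must be done with care to exclude the competing ``epidemic'' root in the supercritical regime and to cover the linear case $n=2$. By contrast the dominated-convergence interchanges are routine once $\sup_\tau f<\infty$ and $p<\infty$ are invoked, and the bounds \eqref{finsizebound} need nothing beyond monotonicity and Theorem~\ref{ineqtheorem}.
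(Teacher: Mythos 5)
Your proposal is correct and follows essentially the same route as the paper: the bounds \eqref{finsizebound} from Theorem~\ref{ineqtheorem} together with monotonicity, the final-size relation \eqref{finS} via dominated convergence in \eqref{Fgraph} with uniqueness supplied by a convexity argument (which the paper only sketches as ``graphical means''), and the threshold criterion by showing the selected root of the fixed-point equation migrates to $F_\infty=1$ exactly when $(1-y)R_0\le 1$. The one point you flag but do not fully close---the affine case $n=2$ with $\psi'(1)=0$ (forcing $p=1$, $y=0$), where $\psi$ need not be strictly positive on $[0,1)$---is treated no more carefully in the paper's own proof.
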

\begin{proof}
	Equation \eqref{finsizebound} follows from Theorem \ref{boundthm} and the observation that $P_S(t)$ and $P_R(t)$ are non-increasing and non-decreasing respectively. 
	
	The feasible $F_{\text{sym}}(t)$ is non-increasing (see Theorem~\ref{th1}), so it converges to some $F_{\text{sym}}(\infty) \in [0,1]$ as $t \to \infty$. Note also that, by definition, $\int_0^t f(\tau) \mbox{d} \tau$ converges to $p \in [0,1]$ as $t \to \infty$. Now, using (\ref{Fgraph}), we can write $F_{\text{sym}}(t)=1-\int_0^{\infty} f_t(\tau) \mbox{d} \tau$, where $f_t(\tau)=f(\tau)(1-y-z F_{\text{sym}}(t-\tau)^{n-1})$ for $\tau \in [0,t]$ and is equal to zero for $\tau>t$. Note that $f_t(\tau)$ converges pointwise to $f(\tau)(1-y-z F_{\text{sym}}(\infty)^{n-1})$ as $t \to \infty$. Thus, since $0 \le f_t(\tau) \le f(\tau)$ for all $t, \tau \ge 0$, we can use the dominated convergence theorem to obtain, c.f. Karrer and Newman (2010, equations 23 and 24),
	\begin{equation} \label{finF} F_{\text{sym}}(\infty)=1-\int_0^{\infty} \lim_{t \to \infty} f_t(\tau) \mbox{d} \tau=1-p \left( 1 - y -z F_{\text{sym}}(\infty)^{n-1} \right) . \end{equation}
	Taking the limit as $t \to \infty$ in (\ref{hom3}), and making use of (\ref{finF}), proves equation \eqref{finS}. It is straightforward to show by graphical means that \eqref{finS} has a unique solution in $[0,z]$. In the case where $z \to 1-y$ from below, it is also straightforward to show by graphical means that, after setting $z=1-y$ in \eqref{finS}, then $S_{\text{mes}}(\infty)=z(=S_{\text{mes}}(0)=P_S(0))$ is the only solution in $[0,z]$ if $y \ge 1 - 1/R_0$ ($R_0 \le 1$ implies this condition). Equation \eqref{noepidemic} is then proved by noting that $P_S(t) \ge S_{\text{mes}}(t)$, and $P_S(t)$ is non-increasing from $P_S(0)=S_{\text{mes}}(0)$. \qed
	\end{proof}
	
		Equation~\ref{finS} is consistent with the final size relation given by Diekmann et al. (1998) (equations 5.3 and 5.4) for a regular random graph in the limit of large population size.

\begin{remark}	Consider an infinite sequence of finite homogeneous stochastic models, indexed by $m$, where $y_m=y \in [0,1)$ for all $m$, and where $N_m \to \infty$, $p_m(n_m-1)\to R_0 < \infty$, $z_m \to 1-y$, as $m \to \infty$ (here, $N_m$ denotes the number of individuals in the $m$th model). This does not preclude the expected number of initial infectives from tending to some positive number, or even diverging, as $m \to \infty$. It is straightforward that, in the limit of this sequence, the sufficient conditions for no major outbreak in Theorem~\ref{outbreakthm} still hold. Note that if in addition we have $n_m \to \infty$ as $m \to \infty$, then the final size relation for the homogeneous message passing system (in this limit) becomes, using \eqref{finS} with $z=1-y$,
	$$ \frac{ S_{\text{mes}}(\infty)}{1-y} = {\rm e}^{-R_0(1-S_{\text{mes}}(\infty)-y)} .  $$
	This is a well-known final size relation in the mean field literature, although usually vaccination is not included (see Miller (2012) for a discussion of derivations of this relation).
	\end{remark}
	
\subsection{\bf{The homogeneous message passing system gives the same epidemic time course as a pairwise model}}
\label{pairwise}
Here we show that a generalised pairwise SIR model, with well-known pairwise models as special cases, gives the same epidemic time course as the homogeneous message passing system. This allows us to prove epidemiological results for the generalised pairwise model. Since pairwise models are known to give good approximations of stochastic epidemic dynamics on networks (see, for example, Keeling (1999) and Sharkey (2008)), this also strengthens the case for the message passing system being a good approximation. 
\begin{theorem}[Equivalence of the message passing and pairwise models]
	\label{pairthm}
For the homogeneous stochastic model, assume that the contact processes are Poisson with rate $\beta$ and that they are independent from the recovery processes, such that $f(\tau)=\beta {\rm e}^{- \beta \tau} \int_{\tau}^{\infty}r(\tau') \mbox{d} \tau'$. Assume also that $r(\tau)$ is continuous. Then,  
\begin{eqnarray} \label{mspair}
\dot{[S]}(t) &=& - \beta [SI](t), \\ \label{mipair}
\dot{[I]}(t) &=&  \beta [SI](t) - \int_0^t r(\tau) \beta [SI](t- \tau) {\rm d} \tau - r(t) N (1-y-z),  \\ \label{msspair}
\dot{[SS]}(t) & = & - 2 \beta \frac{n-1}{n} \frac{[SS](t) [SI](t)}{[S](t)}, \\  \label{msipair}
 \nonumber
\dot{[SI]}(t)&=& - \beta \left( \frac{n-1}{n} \right) \frac{[SI](t) [SI](t)}{[S](t)} \\ \nonumber
&& - \beta [SI](t) \\ \nonumber
&& + \beta \left( \frac{n-1}{n} \right) \frac{[SS](t)[SI](t)}{[S](t)} \\ \nonumber
&& - \int_0^t {\rm e}^{- \beta \tau} r(\tau) \beta \left( \frac{n-1}{n} \right) \frac{[SS](t- \tau)[SI](t - \tau)}{[S](t - \tau)}\\ \nonumber
&& \times \, \, {\rm exp}\Bigg( -\int_{t - \tau}^{t}
 \beta \left( \frac{n-1}{n} \right) \frac{[SI](\tau')}{[S](\tau')} {\rm d} \tau' \Bigg) {\rm d} \tau \\ \nonumber && -n N z {\rm e}^{- \beta t} r (t) (1-y-z) \, {\rm exp}\Bigg( -\int_0^{t}
  \beta \left( \frac{n-1}{n} \right) \frac{[SI](\tau)}{[S](\tau)} {\rm d} \tau \Bigg), \\
\end{eqnarray}
where
\begin{eqnarray} \label{pairwisevar}
\left[S \right](t) & \equiv & N S_{\text{mes}}(t), \\ \label{pairmesi}
\left[I \right](t) & \equiv & N I_{\text{mes}}(t), \\ \label{msspair2}
\left[SS \right](t) & \equiv & n N SS_{\text{mes}}(t) \equiv n N z^2 F_{\text{sym}}(t)^{2(n-1)}, \\ \label{msipair2}
\left[SI \right](t) & \equiv & n N SI_{\text{mes}}(t) \equiv n N  z F_{\text{sym}}(t)^{n-1} \Bigg( \frac{-  \dot{F}_{\text{sym}}(t)}{\beta} \Bigg),
\end{eqnarray}
and $N$ is a positive number. 
\end{theorem}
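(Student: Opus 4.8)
The plan is to treat the four pairwise equations not as a system to be solved but as identities to be \emph{verified}: I substitute the definitions \eqref{pairwisevar}--\eqref{msipair2}, which express $[S],[I],[SS],[SI]$ entirely through the homogeneous message passing solution $F_{\text{sym}}$ (and $S_{\text{mes}},I_{\text{mes}},R_{\text{mes}}$), differentiate, and check that each of \eqref{mspair}--\eqref{msipair} reduces to a true statement about $F_{\text{sym}}$. Before starting I would record the smoothness needed: differentiating \eqref{Fgraph} once shows $F_{\text{sym}}\in C^1$, and since the special kernel $f(\tau)=\beta {\rm e}^{-\beta\tau}\int_\tau^\infty r(\tau')\,\mbox{d}\tau'$ satisfies $f'(\tau)=-\beta f(\tau)-\beta {\rm e}^{-\beta\tau}r(\tau)$ (this is where continuity of $r$ enters), a second differentiation gives $F_{\text{sym}}\in C^2$. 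I would also note $F_{\text{sym}}(0)=1$ and that $F_{\text{sym}}(t)>0$ on any finite interval, so that $[S]=NzF_{\text{sym}}^{n}>0$ and division by $[S]$ (and the logarithms appearing below) are legitimate. Justifying the differentiations under the integral sign is handled by Leibniz's rule as in appendix~\ref{cont}.

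Equations \eqref{mspair} and \eqref{msspair} are the easy ones: they are pure algebra. Differentiating $[S]=NzF_{\text{sym}}^{n}$ gives $\dot{[S]}=nNzF_{\text{sym}}^{n-1}\dot F_{\text{sym}}$, which is exactly $-\beta[SI]$ by \eqref{msipair2}; differentiating $[SS]=nNz^2F_{\text{sym}}^{2(n-1)}$ and substituting the definitions likewise collapses to $-2\beta\frac{n-1}{n}[SS][SI]/[S]$. For \eqref{mipair} I would use $[I]=N(1-S_{\text{mes}}-R_{\text{mes}})$ with $R_{\text{mes}}$ given by \eqref{Rpair} and $S_{\text{mes}}=zF_{\text{sym}}^{n}$; differentiating, applying Leibniz to the $R_{\text{mes}}$ integral (the boundary term produces $r(t)N(1-y-z)$ because $F_{\text{sym}}(0)=1$), and rewriting $nNzF_{\text{sym}}^{n-1}\dot F_{\text{sym}}$ as $-\beta[SI]$ gives \eqref{mipair} directly.

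The real work is \eqref{msipair}. My first move is to simplify the integrating-factor exponentials. From the definitions, $\beta\frac{n-1}{n}[SI](\tau')/[S](\tau')=-(n-1)\dot F_{\text{sym}}(\tau')/F_{\text{sym}}(\tau')$, so each exponential telescopes into a ratio of powers of $F_{\text{sym}}$: one has ${\rm exp}\big(-\int_{t-\tau}^{t}\beta\frac{n-1}{n}\frac{[SI](\tau')}{[S](\tau')}\,\mbox{d}\tau'\big)=\big(F_{\text{sym}}(t)/F_{\text{sym}}(t-\tau)\big)^{n-1}$, and the last term's exponential equals $F_{\text{sym}}(t)^{n-1}$. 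After substituting every pairwise quantity by its $F_{\text{sym}}$-expression, both sides of \eqref{msipair} become expressions in $F_{\text{sym}},\dot F_{\text{sym}},\ddot F_{\text{sym}}$. The $\dot F_{\text{sym}}^{2}$ terms (from $\frac{d}{dt}[F_{\text{sym}}^{n-1}\dot F_{\text{sym}}]$ on the left and from the term $-\beta\frac{n-1}{n}[SI]^2/[S]$ on the right) cancel, and after dividing by $nNzF_{\text{sym}}^{n-1}$ the whole identity reduces to
\begin{align*}
-\tfrac{1}{\beta}\ddot F_{\text{sym}}(t)&=\dot F_{\text{sym}}(t)-(n-1)zF_{\text{sym}}(t)^{n-2}\dot F_{\text{sym}}(t)-{\rm e}^{-\beta t}r(t)(1-y-z)\\
&\quad+(n-1)z\int_0^t {\rm e}^{-\beta\tau}r(\tau)F_{\text{sym}}(t-\tau)^{n-2}\dot F_{\text{sym}}(t-\tau)\,\mbox{d}\tau.
\end{align*}

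This last identity I would prove directly from \eqref{Fgraph}. Writing $g(s)=1-y-zF_{\text{sym}}(s)^{n-1}$ so that $F_{\text{sym}}=1-f*g$, one differentiation together with the kernel identity $f'=-\beta f-\beta {\rm e}^{-\beta\cdot}r$ and the relation $f*g=1-F_{\text{sym}}$ yields a first-order expression for $\dot F_{\text{sym}}$ (which, as a check, reproduces \eqref{case1a} in the exponential case); differentiating once more and substituting $\dot g(s)=-(n-1)zF_{\text{sym}}(s)^{n-2}\dot F_{\text{sym}}(s)$ produces exactly the displayed identity. The main obstacle is contained entirely in \eqref{msipair}: one must spot the telescoping of the integrating factors and carry out the two differentiations of the convolution \eqref{Fgraph} cleanly, with the kernel identity $f'=-\beta f-\beta {\rm e}^{-\beta\cdot}r$ being the precise point at which the Poisson assumption on the contact process (and the continuity of $r$) is indispensable, since it is what converts the $f$-weighted history integral into the ${\rm e}^{-\beta\tau}r(\tau)$-weighted one that appears in the pairwise equation.
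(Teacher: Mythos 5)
Your proposal is correct and follows essentially the same route as the paper's Appendix C: both rest on Leibniz's rule applied to \eqref{Fgraph} with the kernel $f(\tau)=\beta {\rm e}^{-\beta\tau}\int_\tau^\infty r(\tau')\,\mbox{d}\tau'$ (yielding \eqref{Fpairderiv}), the recognition that the integrating-factor exponentials are powers of $F_{\text{sym}}$ via \eqref{pairF2}, and term-by-term substitution of the pairwise definitions. The only difference is organizational: the paper uses \eqref{Fpairderiv} to eliminate $\dot{F}_{\text{sym}}$ from $[SI]$ \emph{before} differentiating, so $\ddot{F}_{\text{sym}}$ never appears, whereas you differentiate the definition directly and then verify the resulting second-order identity, which is precisely the derivative of \eqref{Fpairderiv}.
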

\begin{proof}
	see appendix \ref{pairappendix}.  \qed
	\end{proof}
			
\begin{corollary}[At all time points the pairwise model cannot underestimate the expected epidemic size]
\label{paircor1}
If, in the homogeneous stochastic model, contact processes are Poisson with rate $\beta$, i.e. the marginal distribution for $\omega_{ji}$ is exponential with parameter $\beta$ for all $i \in \mathcal{V}, j \in \mathcal{N}_i$, and these are independent from the infectious periods, then
\begin{equation} \nonumber
[S](t)/N \le  P_{S}(t), \quad  [R](t)/N \ge  P_{R}(t) \qquad (t \ge 0),
\end{equation}
where $[R](t)=N-[S](t)-[I](t)$.
\end{corollary}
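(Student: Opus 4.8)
The plan is to obtain the two inequalities by concatenating the exact variable correspondence of Theorem~\ref{pairthm} with the marginal bound of Theorem~\ref{ineqtheorem}, so that almost no new work is required. First I would verify that the hypotheses of the corollary place us in the setting of Theorem~\ref{pairthm}. If $\omega_{ji}$ is exponential with rate $\beta$ and independent of $\mu_j$, then $h(\tau)=\beta\,{\rm e}^{-\beta\tau}$ and ${\rm P}(\mu_j>\tau\mid\omega_{ij}=\tau)={\rm P}(\mu_j>\tau)=\int_\tau^\infty r(\tau')\,{\rm d}\tau'$, whence
\begin{equation*}
f(\tau)=h(\tau)\,{\rm P}(\mu_j>\tau\mid\omega_{ij}=\tau)=\beta\,{\rm e}^{-\beta\tau}\int_\tau^\infty r(\tau')\,{\rm d}\tau',
\end{equation*}
which is exactly the form assumed in Theorem~\ref{pairthm}.

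Given this, Theorem~\ref{pairthm} supplies the identities $[S](t)=N S_{\text{mes}}(t)$ and $[I](t)=N I_{\text{mes}}(t)$ through the definitions \eqref{pairwisevar}--\eqref{msipair2}. Since the corollary sets $[R](t)=N-[S](t)-[I](t)$ and the homogeneous system obeys $S_{\text{mes}}+I_{\text{mes}}+R_{\text{mes}}=1$ by \eqref{Ipair}, I would immediately deduce $[R](t)=N\big(1-S_{\text{mes}}(t)-I_{\text{mes}}(t)\big)=N R_{\text{mes}}(t)$. Thus the three pairwise quantities, divided by $N$, are precisely the homogeneous message passing probabilities.

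Next I would invoke Theorem~\ref{ineqtheorem}, specialised to the symmetric graph of the homogeneous model. By arc-transitivity, $P_{S_i}(t)$ and $S^{(i)}_{\text{mes}}(t)$ are independent of $i$ and equal to $P_S(t)$ and $S_{\text{mes}}(t)$ respectively, with the analogous statement for the recovered quantities; the theorem then gives $P_S(t)\ge S_{\text{mes}}(t)$ and $P_R(t)\le R_{\text{mes}}(t)$ for every $t\ge0$. Combining this with the identities above yields $[S](t)/N=S_{\text{mes}}(t)\le P_S(t)$ and $[R](t)/N=R_{\text{mes}}(t)\ge P_R(t)$, which is the assertion.

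Since the argument merely chains two established results, I anticipate no genuine difficulty; the only point requiring care is that Theorem~\ref{pairthm} additionally assumes $r$ continuous while the corollary does not. I would address this by observing that the inequalities rest solely on the correspondence $[S]=N S_{\text{mes}}$, $[I]=N I_{\text{mes}}$ together with Theorem~\ref{ineqtheorem}, and that continuity of $r$ is needed in Theorem~\ref{pairthm} only to justify the differentiated pairwise form, not this underlying correspondence. Should one prefer to remain strictly within the hypotheses of Theorem~\ref{pairthm}, approximating a general $r$ by continuous densities and passing to the limit in the monotone, continuous message passing solution (Theorem~\ref{th1}) closes the gap.
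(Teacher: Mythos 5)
Your proposal is correct and follows essentially the same route as the paper, whose proof is simply the observation that the corollary follows immediately from Theorems~\ref{boundthm} and~\ref{pairthm}; your explicit verification of the form of $f(\tau)$ and the identification $[R](t)=NR_{\text{mes}}(t)$ merely spells out what the paper leaves implicit. The remark about the continuity hypothesis on $r$ is a reasonable extra caution but not something the paper addresses.
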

\begin{proof}
This follows immediately from Theorems \ref{boundthm} and \ref{pairthm}. \qed
\end{proof}

\begin{corollary}[Final epidemic size equation for the pairwise model]
	\label{paircor2}
		\begin{equation} \nonumber  \left( \frac{ [S](\infty)}{Nz} \right)^{\frac{1}{n}}=1-p + py+ pz \left(\frac{ [S](\infty)}{Nz} \right)^{\frac{n-1}{n}} , \end{equation} 
		where $[S](\infty) \equiv \lim_{t \to \infty}[S](t)$ and $p \equiv  \int_0^{\infty} \beta {\rm e}^{- \beta \tau'} \int_{\tau'}^{\infty}r(\tau) \mbox{d} \tau \mbox{d} \tau'$.
	\end{corollary}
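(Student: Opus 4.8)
The plan is to obtain this relation as an immediate consequence of the two theorems already in hand, so that essentially no new analysis is required. The crucial link is the identification of the pairwise variable with the message passing variable: under the hypotheses of this corollary (Poisson contacts of rate $\beta$ independent of the infectious periods, so that $f(\tau)=\beta {\rm e}^{-\beta\tau}\int_\tau^\infty r(\tau')\,{\rm d}\tau'$), Theorem~\ref{pairthm} applies and gives, via the defining relation \eqref{pairwisevar}, that $[S](t)=N S_{\text{mes}}(t)$ for every $t\ge 0$. Thus the pairwise susceptible count is, up to the constant factor $N$, nothing other than the homogeneous message passing susceptible probability, and the whole problem reduces to reading off the final-size behaviour of $S_{\text{mes}}$.

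First I would establish that the limit $[S](\infty)$ exists. This follows from Theorem~\ref{th1}, which guarantees that the feasible $F_{\text{sym}}(t)$ is non-increasing and hence converges to some $F_{\text{sym}}(\infty)\in[0,1]$ as $t\to\infty$; consequently $S_{\text{mes}}(t)=zF_{\text{sym}}(t)^n$ converges to $S_{\text{mes}}(\infty)=zF_{\text{sym}}(\infty)^n$, exactly as used in the proof of Theorem~\ref{outbreakthm}. Taking $t\to\infty$ in the identity $[S](t)=N S_{\text{mes}}(t)$ then yields $[S](\infty)=N S_{\text{mes}}(\infty)$, so that
\begin{equation} \nonumber
\frac{[S](\infty)}{Nz}=\frac{S_{\text{mes}}(\infty)}{z}.
\end{equation}

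Next I would substitute this equality into the final-size relation \eqref{finS} of Theorem~\ref{outbreakthm}, which states that $S_{\text{mes}}(\infty)/z$ satisfies $(S_{\text{mes}}(\infty)/z)^{1/n}=1-p+py+pz(S_{\text{mes}}(\infty)/z)^{(n-1)/n}$. Replacing $S_{\text{mes}}(\infty)/z$ by $[S](\infty)/(Nz)$ produces precisely the displayed equation of the corollary. The one remaining verification is that the constant $p$ appearing in \eqref{finS}, namely $p=\int_0^\infty f(\tau)\,{\rm d}\tau$, coincides with the $p$ defined in the corollary statement: inserting the specific $f$ above gives $\int_0^\infty \beta {\rm e}^{-\beta\tau}\int_\tau^\infty r(\tau')\,{\rm d}\tau'\,{\rm d}\tau$, which is the stated $p$ after renaming the dummy variables.

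There is no genuine obstacle here; the content of the result was already absorbed into Theorem~\ref{pairthm} (the pairwise/message passing equivalence) and Theorem~\ref{outbreakthm} (the message passing final-size relation). The only points that require care, and which I would state explicitly rather than leave implicit, are the existence of the limit $[S](\infty)$ and the checking that the two expressions for $p$ agree; both are routine given the monotonicity from Theorem~\ref{th1} and the explicit form of $f$.
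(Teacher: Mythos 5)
Your proposal is correct and follows exactly the paper's route: the paper's proof is the one-line remark that the corollary follows immediately from Theorems~\ref{outbreakthm} and~\ref{pairthm}, and you have simply made explicit the identification $[S](t)=NS_{\text{mes}}(t)$, the existence of the limit via the monotonicity of $F_{\text{sym}}$, and the agreement of the two expressions for $p$. No further comment is needed.
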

	\begin{proof}
		This follows immediately from Theorems \ref{outbreakthm} and \ref{pairthm}. \qed
		\end{proof}

	Note that~\eqref{mspair}-\eqref{msipair} constitute a closed system for the variables $[S](t), [I](t), \newline [SS](t)$ and $[SI](t)$ (if $[S](t)=0$ then the right-hand sides of \eqref{msspair} and \eqref{msipair} are undefined, but in this case the left-hand sides are equal to zero). With reference to \eqref{msspair2} and \eqref{msipair2}, the quantities $SS_{\text{mes}}(t)$ and $SI_{\text{mes}}(t)$ are constructed to capture/approximate, for any given pair of neighbours at time $t$, the probability that they are both susceptible and the probability that the first is susceptible while the second is infected respectively (see appendix~\ref{pairappendix}). The system \eqref{mspair}-\eqref{msipair} also follows directly from application of the individual-level pairwise equations in Wilkinson and Sharkey (2014, equations 8 and 9). In the case where the infectious period is exponentially distributed and letting $N$ be the population size, (\ref{mipair}) and (\ref{msipair}) simplify to ODEs, and the pairwise (without clustering) model of Keeling (1999) is obtained. Similarly, after substituting $r(\tau)=\delta (t-R)$, where $\delta$ is the Dirac delta function, into \eqref{mipair} and \eqref{msipair}, the pairwise model of Kiss et al. (2015) for a non-random infectious period of duration $R$ is obtained (except that the last term in (\ref{mipair}) and the last term in (\ref{msipair}), which relate to the behaviour of the initial infectives, need to be neglected). However, it may be more efficient to solve the simpler message passing systems (via (\ref{case1a})-(\ref{case1b}) and (\ref{case2a})-(\ref{case2b}) respectively) and then, if pairwise quantities are required, these can be computed using (\ref{msspair2}) and (\ref{msipair2}). 
	
	As part of the proof of equivalence between message passing and pairwise models that we present here, we also close a gap in the arguments of Wilkinson and Sharkey (2014) by demonstrating sufficient conditions for the valid application of Leibniz's integral rule (appendix~\ref{cont}) in the derivation of the pairwise equations from the message passing equations (appendix~\ref{pairappendix}). 

\subsection{\bf{The homogeneous message passing system gives the same epidemic time course as the Kermack-McKendrick model (asymptotically)}}
\label{large}

Here, we consider a sequence of homogeneous stochastic models where the regular degree $n$ tends to infinity. As $n \to \infty$, an individual is able to make contacts to a number of neighbours which tends to infinity, so to obtain a finite limit we assume that the infection function $f(\tau)$ depends on $n$ (which we write $f_n(\tau)$) such that:
\begin{equation}
\nonumber
\lim_{n \to \infty}n f_n(\tau)= f^*(\tau)< \infty \qquad(\tau \ge 0).
\end{equation}
Note that, in the limit of large $n$, transmission is frequency dependent and the expected number of infectious contacts
made by a given infected individual during the time interval $(t_1,t_2)$ is $\int_{t_1}^{t_2} f^*(\tau)\mbox{d} \tau$, where time is measured from the moment the individual first became infected.

The deterministic model proposed by Kermack and McKendrick (1927) is as follows:
\begin{eqnarray} \label{hom11}       \dot{S}(t)&=&  S(t) \left[    \int_0^t  f^*(\tau) \dot{S}(t- \tau) \mbox{d} \tau - I(0)f^*(t) \right],  \\        I(t)&=& 1-S(t)-R(t), \\
R(t)&=& R(0)+ \int_0^t r(\tau)[1-R(0)-S(t-\tau)] \mbox{d} \tau.
\label{hom2}
\end{eqnarray}
Equations 12-15 of Kermack and McKendrick (1927) may be obtained from~\eqref{hom11}-\eqref{hom2} after multiplying through by the total population size $N$ in their paper.

The following theorem shows that, under this limiting regime and mild further conditions, the homogeneous message passing system gives the same epidemic time course as the model of Kermack and McKendrick (1927).  For $n=1,2,\dots$, let $S_{\text{mes}(n)}(t),I_{\text{mes}(n)}(t)$ and  $R_{\text{mes}(n)}(t)$ denote the message passing system given by~\eqref{hom3}-\eqref{Fgraph}, where $F_{\text{sym}}(t)$ is replaced by $F_{\text{sym}(n)}(t)$, which satisfies~\eqref{Fgraph} with $f(\tau)$ replaced by $f_n(\tau)$.

\begin{theorem}[Deriving the Kermack-McKendrick model from message passing]
	Suppose that for all $T\ge0$,
	\begin{enumerate}
		\item[(i)] $\epsilon_n(T)=\sup_{0 \le t \le T}|nf_n(t)-f^*(t)| \to 0$ as $n \to \infty$, 
		\item[(ii)] $M_T=\sup_{0 \le t \le T} f^*(t) < \infty$,
	\end{enumerate}
	and that, for all $n=1,2,\dots,$ 
	\begin{enumerate}
		\item[(iii)] $f_n(t)$ is continuously differentiable,
		\item[(iv)] $\left(S_{\text{mes}(n)}(0),I_{\text{mes}(n)}(0),R_{\text{mes}(n)}(0)\right)=(S(0),I(0),R(0))=(z,1-z-y,y)$.
	\end{enumerate}
	Then, for all $T>0$,
	\begin{eqnarray}
	\label{theorem5S}
	\lim_{n \to \infty} \sup_{0 \le t \le T}\left|S_{\text{mes}(n)}(t)-S(t)\right|&=&0,\\
	\lim_{n \to \infty} \sup_{0 \le t \le T}\left|I_{\text{mes}(n)}(t)-I(t)\right|&=&0, \label{theorem5I}\\
	\lim_{n \to \infty} \sup_{0 \le t \le T}\left|R_{\text{mes}(n)}(t)-R(t)\right|&=&0. \label{theorem5R}
	\end{eqnarray}
	\label{kermthm}
\end{theorem}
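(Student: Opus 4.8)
The plan is to reduce the whole system to the scalar renewal quantity $F_{\text{sym}(n)}$ and to track a single rescaled variable. Set $a_n(t):=n\bigl(1-F_{\text{sym}(n)}(t)\bigr)$. By Theorem~\ref{th1} each $F_{\text{sym}(n)}(t)\in[0,1]$ is continuous and non-increasing, so $a_n(t)\ge0$ and the base $F_{\text{sym}(n)}=1-a_n/n$ is $1+O(1/n)$. From \eqref{Fgraph},
\[ a_n(t)=\int_0^t n f_n(\tau)\bigl(1-y-z\,F_{\text{sym}(n)}(t-\tau)^{\,n-1}\bigr)\,d\tau . \]
Bounding the integrand by $(1-y)\,n f_n(\tau)$ and using (i)--(ii) (so $n f_n(\tau)\le M_T+\epsilon_n(T)$ on $[0,T]$), I would first record a uniform a priori bound $\sup_{0\le t\le T}a_n(t)\le C_T$ for all large $n$. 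This bound is what makes raising to the $n$-th power tractable.

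Next I would identify the limit. The natural candidate is the unique continuous solution $a$ on $[0,T]$ of
\[ a(t)=\int_0^t f^*(\tau)\bigl(1-y-z\,e^{-a(t-\tau)}\bigr)\,d\tau , \]
whose existence and uniqueness follow from a Banach fixed-point argument, since $x\mapsto e^{-x}$ is Lipschitz and $\int_0^T f^*<\infty$ by (ii). I then set $S(t):=z\,e^{-a(t)}$, so $z\,e^{-a(t-\tau)}=S(t-\tau)$, and differentiate the integral equation using Leibniz's rule (justified as in appendix~\ref{cont}; note $f^*$ is continuous, being the uniform limit by (i),(iii) of the continuous $n f_n$). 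Using $a(0)=0$, $S(0)=z$ and $1-y-z=I(0)$, a short computation yields exactly \eqref{hom11}; together with the renewal equation defining $R$ this shows $\bigl(z\,e^{-a},\,1-S-R,\,R\bigr)$ is the Kermack--McKendrick solution, identifying our limit with $(S,I,R)$.

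The heart of the proof is the uniform convergence $a_n\to a$ on $[0,T]$. Subtracting the two integral equations gives a difference of two pieces: one controlled by $(1-y)\,T\,\epsilon_n(T)\to0$ via (i), and one of the form $z\int_0^t f^*(\tau)\bigl(e^{-a(t-\tau)}-F_{\text{sym}(n)}(t-\tau)^{\,n-1}\bigr)\,d\tau$. The key lemma is the uniform rate $\bigl|(1-x/n)^{\,n-1}-e^{-x}\bigr|\le K_{C_T}/n$ on $0\le x\le C_T$, obtained from $n\log(1-x/n)=-x+O(x^2/n)$; combined with $|e^{-a_n}-e^{-a}|\le|a_n-a|$ it gives $\bigl|F_{\text{sym}(n)}(s)^{\,n-1}-e^{-a(s)}\bigr|\le|a_n(s)-a(s)|+K_{C_T}/n$. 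Writing $\delta_n(t)=\sup_{0\le s\le t}|a_n(s)-a(s)|$ and bounding $f^*\le M_T$, I obtain $\delta_n(t)\le \eta_n+z M_T\int_0^t\delta_n(s)\,ds$ with $\eta_n\to0$, so Gr\"onwall's inequality yields $\delta_n(T)\le\eta_n e^{z M_T T}\to0$. Then $S_{\text{mes}(n)}=z F_{\text{sym}(n)}^{\,n}\to z e^{-a}=S$ uniformly by the same power-rate lemma; \eqref{theorem5R} follows by feeding this into the renewal equation \eqref{Rpair} (the integral inherits uniform convergence as $r$ is a fixed density), and \eqref{theorem5I} follows from $I_{\text{mes}(n)}=1-S_{\text{mes}(n)}-R_{\text{mes}(n)}$.

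I expect the main obstacle to be exactly the amplification of errors under the $n$-th power: an $O(1/n)$ discrepancy in the base $F_{\text{sym}(n)}$ must not explode when raised to the power $n$, and the per-time-step $O(1/n)$ error from the lemma must be kept from accumulating. The uniform a priori bound $a_n\le C_T$ (which makes the constant $K_{C_T}$ uniform) together with the Gr\"onwall closure is what controls both effects; producing a genuinely uniform-in-$x$ version of $(1-x/n)^{\,n-1}\to e^{-x}$ over the whole range $[0,C_T]$, rather than merely a pointwise one, is the delicate technical point.
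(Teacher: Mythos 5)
Your argument is correct, but it takes a genuinely different route from the paper's. The paper works at the level of derivatives: it differentiates $S_{\text{mes}(n)}(t)=zF_{\text{sym}(n)}(t)^n$ and \eqref{Fgraph} to obtain an integro-differential equation for $\dot{S}_{\text{mes}(n)}$ structurally parallel to \eqref{hom11}, and then runs a fairly heavy estimate (appendix E) to get a Gr\"onwall inequality for $\sup_t|\dot{S}_{\text{mes}(n)}(t)-\dot{S}(t)|$, with the lower bound $F_{\text{sym}(n)}\ge 1-\epsilon_n^{(1)}(T)$ controlling the denominators. You instead linearise at the level of the integral equation itself via $a_n=n(1-F_{\text{sym}(n)})$, prove convergence of $a_n$ to the solution $a$ of a limiting Volterra equation, and convert back through the uniform rate $|(1-x/n)^{n-1}-e^{-x}|\le K_{C_T}/n$ on $[0,C_T]$. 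What your approach buys: the Gr\"onwall step is applied to the undifferentiated quantities, so you never need to differentiate the message-passing equations and condition (iii) is essentially not used (continuity of $f^*$ already follows from (i) and continuity of $f_n$); you also get existence and uniqueness of the Kermack--McKendrick solution as a byproduct of the fixed-point argument, something the paper tacitly assumes. What the paper's approach buys is that it stays entirely within the objects named in the theorem and produces explicit constants $A(n,T)$, $B(n,T)$. One small wrinkle in your write-up: in the identification step you differentiate $a(t)=\int_0^t f^*(\tau)\bigl(1-y-ze^{-a(t-\tau)}\bigr)\,\mbox{d}\tau$ by Leibniz's rule, but that requires knowing in advance that $t\mapsto e^{-a(t)}$ is $C^1$ (neither factor in the convolution is a priori differentiable), which is mildly circular. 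This is easily patched: either bootstrap regularity of $a$ from the fixed-point construction, or go the other way --- integrate $\frac{d}{dt}\log S(t)$ from \eqref{hom11} and interchange the order of integration to show that any solution of \eqref{hom11} with $S(0)=z$ satisfies your integral equation with $a=-\log(S/z)$, then invoke uniqueness of $a$.
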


\begin{proof}
	Fix $T>0$ and note first from~\eqref{Fgraph} that, for feasible $F_{\text{sym}(n)}(t)$ and all $t \in [0,T]$,
	\begin{equation*} 
	1 \ge F_{\text{sym}(n)}(t) \ge 1- \int_0^t f_n(\tau) \mbox{d}\tau \quad (n=1,2,\dots).
	\end{equation*}
	Now $n\int_0^t f_n(\tau) \mbox{d}\tau \le T(M_T+\epsilon_n(T))$, for all $t \in [0,T]$, so conditions (i) and (ii) imply that
	there exists $\epsilon_n^{(1)}(T)\ge 0$ such that for all $t \in [0,T]$,
	\begin{equation} \label{Flargen}
	1 \ge F_{\text{sym}(n)}(t) \ge 1-\epsilon_n^{(1)}(T) \quad (n=1,2,\dots),
	\end{equation}
	where $\epsilon_n^{(1)}(T) \to 0$ as $n \to \infty$.  Thus, for all sufficiently large $n$,  $ F_{\text{sym}(n)}(t)$ is non-zero for all $t \in [0,T]$.  
	
	Differentiating~\eqref{hom3} yields
	\begin{eqnarray} 
	\dot{S}_{\text{mes}(n)}(t)&=&nz F_{\text{sym}(n)}(t)^{n-1}\dot{F}_{\text{sym}(n)}(t), \label{Smean1}
	\end{eqnarray}
	and differentiating~\eqref{Fgraph}, using Leibniz's integral rule (see appendix \ref{cont}), gives
	\begin{eqnarray} \nonumber
	\dot{F}_{\text{sym}(n)}(t)&=&-f_n(t)(1-y-z) \\
	&&+(n-1)z\int_0^t f_n(\tau) F_{\text{sym}(n)}(t- \tau)^{n-2} \dot{F}_{\text{sym}(n)}(t-\tau) \mbox{d} \tau. 
	\label{Fsymdot}
	\end{eqnarray}
	Substituting~\eqref{Fsymdot} into~\eqref{Smean1}, and using \eqref{hom3}, gives
	\begin{eqnarray} \nonumber
	\dot{S}_{\text{mes}(n)}(t)&=& \frac{ S_{\text{mes}(n)}(t)}{F_{\text{sym}(n)}(t)} \Bigg[ \frac{n-1}{n} \int_0^t n f_n(\tau) \frac{\dot{S}_{\text{mes}(n)}(t- \tau)}{F_{\text{sym}(n)}(t-\tau)} \mbox{d} \tau \\
	&&-  n f_n(t)(1-y-z)  \Bigg].
	\label{Smean}
	\end{eqnarray}
	It can be shown, using~\eqref{hom11} and~\eqref{Smean} that, for all $ t \in [0,T]$,
	\begin{equation}
	\label{gronwall}
	\left|\dot{S}_{\text{mes}(n)}(t)- \dot{S}(t)\right| \le A(n,T) \int_0^t \left|\dot{S}_{\text{mes}(n)}(u)-\dot{S}(u)\right| \mbox{d} u + B(n,T),
	\end{equation}
	where $B(n,T) \to 0$ as $n \to \infty$ and $0 \le A(n,T) \le 4M_T$ for all sufficiently large $n$ (see appendix \ref{proving}). Application of Gronwall's inequality (see appendix \ref{cont}) then yields that, for all $ t \in [0,T]$,
	\begin{equation} \label{gronout}
	\left|\dot{S}_{\text{mes}(n)}(t)- \dot{S}(t)\right| \le B(n,T) {\rm e}^{A(n,T)t}.
	\end{equation}
	Thus
	\begin{equation*}
	\lim_{n \to \infty} \sup_{0 \le t \le T} \left|\dot{S}_{\text{mes}(n)}(t)- \dot{S}(t)\right|=0,
	\end{equation*} 
	whence
	\begin{eqnarray*}
		\lim_{n \to \infty} \sup_{0 \le t \le T} \left|{S}_{\text{mes}(n)}(t)- {S}(t)\right|
		&=& 
		\lim_{n \to \infty} \sup_{0 \le t \le T} \left| \int_0^t  \dot{S}_{\text{mes}(n)}(u)- \dot{S}(u) \mbox{d} u\right|\\
		&\le& 
		\lim_{n \to \infty} \sup_{0 \le t \le T} \int_0^t \left|\dot{S}_{\text{mes}(n)}(u)- \dot{S}(u)\right|\mbox{d} u\\
		&\le&
		\lim_{n \to \infty} T \sup_{0 \le t \le T} \left|\dot{S}_{\text{mes}(n)}(t)- \dot{S}(t)\right|\\
		=0,
	\end{eqnarray*}
	proving~\eqref{theorem5S}. Equation~\eqref{theorem5R} now follows using a similar argument and~\eqref{theorem5I} is then immediate. \qed
	
\end{proof}

It is straightforward that if $f^*(\tau)= \beta k {\rm e}^{- \gamma \tau}$ and $r(\tau)=\gamma {\rm e}^{- \gamma \tau}$ then the Kermack-McKendrick model reduces to a system of ODEs:
\begin{eqnarray} \label{homS}
\dot {S}(t)&=&-\beta k   S(t) I(t),
\\  \label{homI}
\dot {I}(t)&=&\beta k   S(t) I(t) - \gamma I(t),  \\
\dot {R}(t)&=& \gamma I(t).
\label{hommarkov}
\end{eqnarray}
For this special case, we state the following corollary to Theorem~\ref{kermthm}, see also Wilkinson et al. (2016), where it is proved that, for non-random initial conditions, the Kermack-McKendrick model bounds the so-called `general stochastic epidemic'. 

\begin{corollary}[In the Markovian case, message passing and pairwise models are better approximations than the Kermack-McKendrick model]
	\label{kermboundcor}
	Assume that, in the homogeneous stochastic model, contact and recovery processes are independent and Poisson with rates $\beta$ and $\gamma$ respectively. Specifically, $h(\tau)=\beta {\rm e}^{-\beta \tau}$, $r(\tau)=\gamma {\rm e}^{- \gamma \tau}$ and $f(\tau)=\beta {\rm e}^{-(\beta + \gamma)\tau}$. Let $k$ denote the regular degree of the symmetric graph (instead of $n$). For this special case, 
	\begin{equation} \nonumber
	S(t) < S_{\text{mes}}(t) \le  P_{S}(t), \quad  R(t) > R_{\text{mes}}(t) \ge  P_{R}(t) \qquad (t > 0),
	\end{equation}
	where $S(t)$ and $R(t)$ are given by \eqref{homS}-\eqref{hommarkov}, with $S(0)=z,I(0)=1-y-z$ and $R(0)=y$, and $S_{\text{mes}}(t)$ and $R_{\text{mes}}(t)$ are given by \eqref{otherF}-\eqref{hom} with $n$ replaced by $k$.
\end{corollary}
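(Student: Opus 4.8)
The two right-hand inequalities, $S_{\text{mes}}(t)\le P_S(t)$ and $R_{\text{mes}}(t)\ge P_R(t)$, are just Theorem~\ref{boundthm} specialised to the homogeneous model, so the real content is the strict left-hand inequalities $S(t)<S_{\text{mes}}(t)$ and $R(t)>R_{\text{mes}}(t)$ for $t>0$. My plan is to obtain these from Theorem~\ref{kermthm} by showing that the convergence of the message passing system to the Kermack--McKendrick model is monotone in the degree. Concretely, I would embed the degree-$k$ system of the corollary into the sequence of Theorem~\ref{kermthm}: for $n\ge k$ take degree $n$ with per-edge contact rate $\tilde\beta_n=\beta k/n$, so that $f_n(\tau)=\tilde\beta_n\,\mathrm{e}^{-(\tilde\beta_n+\gamma)\tau}$. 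Then $nf_n(\tau)=\beta k\,\mathrm{e}^{-(\beta k/n+\gamma)\tau}\to\beta k\,\mathrm{e}^{-\gamma\tau}=:f^*(\tau)$ uniformly on compacts, hypotheses (i)--(iv) of Theorem~\ref{kermthm} are readily checked, and the $n=k$ member is exactly the corollary's system (since $\tilde\beta_k=\beta$). Hence $S_{\text{mes}(n)}(t)\to S(t)$ and $R_{\text{mes}(n)}(t)\to R(t)$ uniformly on $[0,T]$.

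The crux is the monotonicity in $n$. Working in the Markovian case I would pass to the variable $W_n(t):=-n\log F_{\text{sym}(n)}(t)\ge 0$ (well defined since $F_{\text{sym}(n)}\in(0,1]$), so that $S_{\text{mes}(n)}(t)=z\,\mathrm{e}^{-W_n(t)}$. Dividing the ODE \eqref{case1a} (with $\beta$ replaced by $\tilde\beta_n$) by $F_{\text{sym}(n)}$ and using $n\tilde\beta_n=\beta k$ turns it into the autonomous equation $\dot W_n=\Psi_n(W_n)$, $W_n(0)=0$, where
\begin{equation*}
\Psi_n(w)=-n\gamma\bigl(\mathrm{e}^{w/n}-1\bigr)+\beta k\Bigl(1-y\,\mathrm{e}^{w/n}-z\,\mathrm{e}^{-(n-2)w/n}\Bigr).
\end{equation*}
I would then check, term by term, that $\Psi_n(w)$ is strictly increasing in $n$ for each fixed $w>0$: the quantity $n(\mathrm{e}^{w/n}-1)$ is decreasing in $n$, while $\mathrm{e}^{w/n}$ and $\mathrm{e}^{-(n-2)w/n}=\mathrm{e}^{-w}\mathrm{e}^{2w/n}$ are both decreasing in $n$, and the signs combine so that every term of $\Psi_n$ increases with $n$; at $w=0$ one has $\Psi_n(0)=\beta k(1-y-z)>0$ independently of $n$. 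Since $\Psi_n$ is locally Lipschitz and $W_n$ stays bounded on each $[0,T]$, a standard strict comparison theorem for scalar ODEs gives $W_{n+1}(t)>W_n(t)$ for all $t>0$, i.e. $S_{\text{mes}(n)}(t)=z\,\mathrm{e}^{-W_n(t)}$ is strictly decreasing in $n$ for $t>0$. The strict inequality for $R$ follows at once: from \eqref{Rpair} with $r(\tau)=\gamma\,\mathrm{e}^{-\gamma\tau}$, the integrand $1-y-S_{\text{mes}(n)}(t-\tau)$ is strictly increasing in $n$, so $R_{\text{mes}(n)}(t)$ is strictly increasing in $n$ for $t>0$.

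Finally I would combine the two ingredients. Because $(S_{\text{mes}(n)}(t))_{n\ge k}$ is strictly decreasing and converges to $S(t)$, every term exceeds the limit, so $S_{\text{mes}}(t)=S_{\text{mes}(k)}(t)>S(t)$ for $t>0$; symmetrically $R_{\text{mes}}(t)=R_{\text{mes}(k)}(t)<R(t)$. Together with the bounds from Theorem~\ref{boundthm} this yields the full chains $S(t)<S_{\text{mes}}(t)\le P_S(t)$ and $R(t)>R_{\text{mes}}(t)\ge P_R(t)$. The main obstacle is the monotonicity step: the difficulty is that increasing $n$ simultaneously changes the degree, the exponent in $F_{\text{sym}(n)}^{\,n-1}$, and (through the rescaling) the per-edge rate, so a naive comparison of the $F$-equations is inconclusive. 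The logarithmic rescaling $W_n=-n\log F_{\text{sym}(n)}$ is what disentangles these effects and renders the vector field monotone in the discrete parameter $n$; verifying that monotonicity term by term, and that it survives as a strict inequality of the solutions for $t>0$, is the heart of the argument.
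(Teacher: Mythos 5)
Your proposal is correct and follows essentially the same route as the paper's appendix proof: the outer bounds come from Theorem~\ref{boundthm}, the strict inner bounds come from embedding the degree-$k$ system into the Theorem~\ref{kermthm} sequence via $f_n(\tau)=(\beta k/n)\mathrm{e}^{-(\beta k/n+\gamma)\tau}$ and proving strict monotonicity of $S_{\text{mes}(n)}(t)$ in $n$. Your substitution $W_n=-n\log F_{\text{sym}(n)}$ is just a monotone reparametrisation of the paper's $u_n=F_{\text{sym}(n)}^n$, and the term-by-term monotonicity of $\Psi_n$ rests on exactly the same facts (e.g.\ $n(\mathrm{e}^{\lambda/n}-1)$ decreasing in $n$) used there.
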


\begin{proof}
	See appendix \ref{massbound}. \qed
	\end{proof}

\section{Discussion}
\label{disc}
The message passing equations of Karrer and Newman (2010) approximate the expected time course for non-Markovian SIR epidemic dynamics on networks. In a later paper, Wilkinson and Sharkey (2014) slightly generalised their equations in order to make them applicable to stochastic models with more individual level heterogeneity. Here, for the first time, we have shown that Karrer and Newman's system of message passing equations, and its generalisation, have unique feasible solutions (Theorem~\ref{th1}).

An important feature of the message passing equations is that they produce an upper bound to the expected epidemic size (cumulative number of infection events) at every point in time. Thus, they give a `worst case scenario'. In addition, they exactly capture the expected epidemic when the contact network is a tree. Here, we extended these results to a further generalised stochastic model which includes realistic correlations between post-infection contact times and the infectious period (Theorem~\ref{ineqtheorem}). This situation can occur when individuals may adopt disease-combating behaviour, such as taking antiviral medication, which acts both on the ability of an individual to pass on the infection as well as the duration of their infectivity.

Much of this paper was devoted to a special case of the stochastic model which we referred to as the `homogeneous stochastic model', in which individuals are homogeneous and the contact network is a symmetric graph (correlations between post-infection contact times and the infectious period are still allowed). Examples of a symmetric graph include a finite complete graph, an infinite square lattice and an infinite Bethe lattice. Due to symmetry, the message passing system here reduces to just four equations which we refer to as the `homogeneous message passing system'. This system is equivalent in form to a special case of the system found by Karrer and Newman (2010) to describe epidemic dynamics on random configuration networks, but here it is applied to a different stochastic model. These equations were analysed, making use of Theorem~\ref{ineqtheorem}, to obtain a result which shows that cycles in the contact network serve to inhibit the stochastic epidemic (Theorem~\ref{theorem3}). Following arguments from Karrer and Newman (2010), we also obtained a single equation which provides an upper bound on the final epidemic size (Theorem~\ref{outbreakthm}); for the Bethe lattice, the final epidemic size is captured exactly. This naturally provides sufficient conditions, in terms of an $R_0$-like quantity and the level of vaccination, for there to be no major outbreak (Theorem~\ref{outbreakthm}).  

We found that the `limit' of an appropriate sequence of homogeneous message passing systems gives the same epidemic time course as the Kermack-McKendrick (1927) epidemic model (Theorem~\ref{kermthm}) showing that it can be viewed as a special case of message passing. This also has the advantage of relating it to the underlying stochastic model (see also Barbour and Reinert (2013) who establish an exact correspondence). The final epidemic size result, and sufficient conditions for no major outbreak, described above for the homogeneous message passing system then translate directly.

From the homogeneous message passing system, we also constructed an equivalent population-level pairwise system which incorporates a general infectious period (Theorem \ref{pairthm}). This can also be derived directly as a special case of the general individual-level pairwise system of Wilkinson and Sharkey (2014, equations 8 and 9) by applying the conditions of the homogeneous stochastic model. Here we filled a gap in the arguments of this paper by demonstrating sufficient conditions for the valid application of Leibniz's integral rule (appendix~\ref{cont}). This population-level pairwise system contains the Poisson pairwise model (without clustering) of Keeling (1999) as a special case. It also contains the delay differential equation model of Kiss et al. (2015) as a special case. We note that an entirely different derivation of \eqref{mspair}-\eqref{msipair} has been found independently and in parallel by R{\"o}st et al. (2016).

In general, we have emphasised the equivalence between several different types of SIR epidemic model. Specifically, we mention the derivation of the Kermack-McKendrick model as a special case of message passing and the equivalence (under Markovian transmission) of message passing and a class of pairwise models (see also Wilkinson and Sharkey (2014)).  We also note the recently submitted paper by Sherborne et al. (2016) which highlights the equivalence of message-passing and edge-based models (Miller et al., 2011), and that there is equivalence between edge-based models and the model of Volz (2008) (proved by Miller (2011)), and between the model of Volz and the binding site model of Leung and Diekmann (2017, Remark 1). While for SIR dynamics, message passing provides quite a general unifying framework, we note that for other dynamics such as SIS, it remains difficult to formulate a similar construction. 

Unification of models is valuable in narrowing the lines of enquiry and simplifying ongoing research. In addition, owing to their different constructions, different types of results have been more forthcoming for some models than for others, and unification can allow results for one model to be automatically transferred to another. For example, here, by unification with message passing, we have been able to show that when contact and recovery processes are independent and Poisson, the Kermack-McKendrick model (which then reduces to a mass action ODE model) provides a rigorous upper bound on the expected epidemic size at time $t>0$ in the homogeneous stochastic model (Corollary~\ref{kermboundcor}). However, the bound is coarser than that provided by the message passing and pairwise systems, so we now know that these are better approximations. This extends the result that, for non-random initial conditions, the Kermack-McKendrick model bounds the so-called `general stochastic epidemic' (Wilkinson et al. 2016). An interesting development would be to show that the Kermack-McKendrick model \eqref{hom11}-\eqref{hom2} bounds the homogeneous stochastic model more generally. We observe that this could be achieved by showing that the message passing system for the stochastic model is the first in a sequence of message passing systems indexed by $n$, which satisfies the conditions for Theorem \ref{kermthm}, and where $S_{\text{mes}(n)}(t)$ is non-increasing with $n$; this is easy to do for Poisson transmission and recovery processes (appendix~\ref{massbound}). Another extension worthy of investigation is to multitype SIR epidemics.

\appendix

\section{Proof of Theorem \ref{th1}}
\label{feasunique}

 Reproducing an argument from Karrer and Newman (2010), here we construct a feasible (bounded between 0 and 1) solution of (\ref{Fgen1}).
Let $F_{(0)}^{i \leftarrow j}(t)=1$ for all $i \in \mathcal{V}, j \in \mathcal{N}_i$ and all $t \ge 0$, and define the following iterative procedure. For $m=1,2,\ldots$, let
\begin{equation} \label{iter}
F_{(m)}^{i \leftarrow j}(t) = 1- \int_0^t f_{ij} (\tau) \Big(1-y_j-z_j \prod_{k \in \mathcal{N}_j \setminus i} F_{(m-1)}^{j \leftarrow k}(t-\tau)  \Big) \mbox{d} \tau.
\end{equation}
 It is easily shown that $1 \ge F_{(m)}^{i \leftarrow j}(t) \ge F_{(m+1)}^{i \leftarrow j}(t)\ge 1- \int_0^t f_{ij}(\tau) \mbox{d} \tau$, for all $i \in \mathcal{V}, j \in \mathcal{N}_i$, $t \ge 0$ and $m=0,1,\dots$, whence $\vec{F}_m(t) \equiv (F_m^{i \leftarrow j}(t): i \in \mathcal{V}, j \in \mathcal{N}_i)$ converges to some $\vec{F}_{\infty}(t)$ as $m \to \infty$, and $\vec{F}_{\infty}(t)$ is a feasible solution of~\eqref{Fgen1}. Moreover, letting $\vec{F}_*(t)$ be any feasible solution of~\eqref{Fgen1}, it can be shown, arguing as in Corduneanu (1991), section 1.3, that 
 \begin{equation} \label{unique} \sup_{i \in \mathcal{V}, j \in \mathcal{N}_i}  | F_*^{i \leftarrow j}(t)- F_m^{i \leftarrow j}(t)  |  \le  \frac{(N_{\text{max}}-1)^m(t  \,  f_{\text{max}} )^{m+1}}{(m+1)!} , \end{equation}
 where $N_{\text{max}}=\sup_{i \in \mathcal{V}}|\mathcal{N}_i|$ and $f_{\text{max}}=\sup_{i \in \mathcal{V}, j \in \mathcal{N}_i} \sup_{t' \ge 0}f_{ij}(t')$. Assume that $N_{\text{max}}< \infty$ and $f_{\text{max}}< \infty$. Then, the right-hand side of \eqref{unique} converges to zero as $m \to \infty$, and $\vec{F}_{\infty}(t)$ must be the unique feasible solution of~\eqref{Fgen1}.

Note that \eqref{iter} implies that if, for all $i \in \mathcal{V}, j \in \mathcal{N}_i$, it is the case that $F_{(m-1)}^{i \leftarrow j}(t)$ is non-increasing and belongs to $[0,1]$ for all $t \ge 0$, then these properties are also held by $F_{(m)}^{i \leftarrow j}(t) $ for all $i \in \mathcal{V}, j \in \mathcal{N}_i$. Since these properties are held by $F_{(0)}^{i \leftarrow j}(t) (=1)$ for all $i \in \mathcal{V}, j \in \mathcal{N}_i$, then, by induction, they hold for all $m \ge 0$, so $F_{(\infty)}^{i \leftarrow j}(t)$ is non-increasing for all $i \in \mathcal{V}, j \in \mathcal{N}_i$. Thus, the feasible solution of \eqref{Fgraph} (for $F_{\text{sym}}(t)$) is non-increasing, whence $S_{\text{mes}}(t)$ is non-increasing.

To show continuity of the feasible solution, first note that \eqref{iter} implies that if, for all $i \in \mathcal{V}, j \in \mathcal{N}_i$, it is the case that $F_{(m-1)}^{i \leftarrow j}(t)$ is continuous, then $F_{(m)}^{i \leftarrow j}(t) $ is also continuous for all $i \in \mathcal{V}, j \in \mathcal{N}_i$. Since $F_{(0)}^{i \leftarrow j}(t) (=1)$ is continuous for all $i \in \mathcal{V}, j \in \mathcal{N}_i$, then, by induction, $F_{(m)}^{i \leftarrow j}(t) $ is continuous for all $m \ge 0, i \in \mathcal{V}, j \in \mathcal{N}_i$.  Now, for any fixed $T>0$, the bound in~\eqref{unique} holds for all $t \in [0, T]$ provided $t$ in the right-hand side of~\eqref{unique} is replaced by $T$.  Thus  $\vec{F}_m(t)$ converges uniformly to $\vec{F}_{\infty}(t)$ over $[0,T]$ as $n \to \infty$ and, since each $\vec{F}_m(t)$ is continuous on $[0, T]$, it follows that $\vec{F}_{\infty}(t)$ is also continuous on $[0, T]$. 
This holds for any $T>0$, so $\vec{F}_{\infty}(t)$ is continuous on $[0,\infty)$.

\section{Proof of Theorem \ref{boundthm}}
\label{boundappend}

We suppose first that the vertex set $\mathcal{V}$ is finite.  Similarly to Wilkinson and Sharkey (2014, section III), and Ball et al. (2015), it is straightforward to show that the indicator variable $\mathbbm{1}_{i \leftarrow \mathcal{A} (t)}$ for the event that a cavity state-individual $i \in \mathcal{V}$ does not receive any infectious contacts from any of $\mathcal{A} \subset \mathcal{N}_i$ by time $t \ge 0$ is a function of the random variables $\mathcal{X}^{**} \equiv\cup_{i \in \mathcal{V}}\{\mathcal{X}_i^*, Y_i \}$ (see the beginning of section~\ref{second}), and that it is non-decreasing with respect to each element of $\mathcal{X}^{**}$. Thus, since $\mathcal{X}^{**}$ is a set of associated variables (by assumption, and Esary et al. (1967, $(\text{P}_2)$ and $(\text{P}_3)$))) and $Y_i$ is independent of all other members of $\mathcal{X}^{**}$, then using Esary et al. (1967, Theorem 4.1), we have
\begin{equation} \label{inp2} P_{S_i}(t) = z_i {\rm E}[\mathbbm{1}_{i \leftarrow \mathcal{N}_i (t)} ]  \ge z_i \prod_{j \in \mathcal{N}_i} {\rm E}[\mathbbm{1}_{i \leftarrow j (t)} ] = z_i \prod_{j \in \mathcal{N}_i} H^{i \leftarrow j}(t) \qquad (i \in \mathcal{V}), \end{equation}
with equality occurring when the graph is a tree or forest (where putting an individual into the cavity state prevents any dependencies between the states of its neighbours). Recall that $z_i \equiv {\rm P}(Y_i=2)$ is the probability that $i$ is initially susceptible. 

Similarly, the indicator variable $\mathbbm{1}_{(i)j \leftarrow \mathcal{A} (t)}$ for the event that a cavity state-individual $j \in \mathcal{V}$ does not receive any infectious contacts from any of $\mathcal{A} \subset \mathcal{N}_j \setminus i$ by time $t \ge 0$, where $i \in \mathcal{N}_j$ is also in the cavity state, is a function of the random variables $\mathcal{X}^{**}$, and it is non-decreasing with respect to each. Again, since $\mathcal{X}^{**}$ is a set of associated variables then we have (c.f. \eqref{firstH} and \eqref{Fgen1}),
\begin{eqnarray} \nonumber   \Phi_i^j(t) =  {\rm E}[\mathbbm{1}_{(i)j \leftarrow \mathcal{N}_j \setminus i (t)}] & \ge &  \prod_{k \in \mathcal{N}_j \setminus i}{\rm E} [\mathbbm{1}_{(i)j \leftarrow k (t)}] \\ \nonumber
&\ge&  \prod_{k \in \mathcal{N}_j \setminus i} {\rm E}[\mathbbm{1}_{j \leftarrow k (t)}]  \\
&=& \prod_{k \in \mathcal{N}_j \setminus i}H^{j \leftarrow k}(t),\label{inp}  \end{eqnarray}
where the second inequality follows from the fact that taking an individual out of the cavity state cannot increase the probability that a different individual receives no infectious contacts from a given neighbour by time $t \ge 0$. Again, equality occurs when the graph is a tree or forest.

The above derivations of~\eqref{inp2} and~\eqref{inp} break down when the vertex set $\mathcal{V}$  is countably infinite, since the theory in
Esary et al. (1967) requires that the set of random variables $\mathcal{X}^{**}$ is finite.  Suppose now that  $\mathcal{V}$ is countably infinite and
label the vertices $1,2,\dots$.  Fix $i \in \mathcal{V}$ and an integer $n \ge i$.  Let $G^{(n)}=(\mathcal{V}^{(n)}, \mathcal{E}^{(n)})$ be the graph
obtained from $G$ by deleting the vertices $n+1,n+2, \dots$ and all edges connected to those vertices.  Now, since  $|\mathcal{V}^{(n)}| < \infty$, the inequality~\eqref{inp2} yields 
\begin{equation}
\label{inp2finite}
P_{S_i}^{(n)}(t) \ge z_i \prod_{j \in \mathcal{N}^{(n)}_i} H^{(n),i \leftarrow j}(t),
\end{equation}
where the superfix $n$ denotes that the quantity is defined for the epidemic on $G^{(n)}$.  Further, for $n=i,i+1,\dots$, the epidemic on $G^{(n)}$ can be defined using the same set $\mathcal{X}^{**} \equiv\cup_{i \in \mathcal{V}}\{\mathcal{X}_i^*, Y_i \}$ of random variables. It then follows that, for any $t \ge 0$, the event that individual $i$ is susceptible at time $t$ in the epidemic on $G^{(n)}$ decreases with $n$ and tends to the event that individual $i$ is susceptible at time $t$ in the epidemic on $G$ as $n \to \infty$, so $P_{S_i}^{(n)}(t) \to P_{S_i}(t)$ as $n \to \infty$ by the continuity of probability measures.
  A similar argument shows that $H^{(n),i \leftarrow j}(t) \to   H^{i \leftarrow j}(t)$ as $n \to \infty$.
Letting $n \to \infty$ in~\eqref{inp2finite} then shows that~\eqref{inp2} holds when $\mathcal{V}$ is countably infinite, as  $|\mathcal{N}_i| < \infty$.  The same method of proof shows that~\eqref{inp} also holds  when $\mathcal{V}$ is countably infinite.

Using \eqref{inp} in conjunction with \eqref{firstH} we have
\begin{equation} \label{finalHineq}
H^{i \leftarrow j}(t) \ge 1- \int_0^t f_{ij}(\tau) \big(1-y_j-z_j  \prod_{k \in \mathcal{N}_j \setminus i}H^{j \leftarrow k}(t-\tau)  \big) \mbox{d} \tau,
\end{equation}
where equality occurs when the graph is a tree or forest. Using \eqref{finalHineq}, it is easy to show by the iterative procedure in appendix A (except with $F^{i \leftarrow j}_{(0)}(t)=H^{i \leftarrow j}(t)$) that a unique feasible solution of \eqref{Fgen1} exists and, using this solution, that $F^{i \leftarrow j}(t) \le H^{i \leftarrow j}(t)$ for all $i \in \mathcal{V}, j \in \mathcal{N}_i$ and all $t \ge 0$, with equality occurring when the graph is a tree or forest. This fact, in combination with \eqref{inp2}, c.f. \eqref{hom1}, proves \eqref{ineq1}, and consequently, c.f. \eqref{Rm}, gives \eqref{ineq2}.

\section{Proof of Theorem \ref{pairthm}}
\label{pairappendix}
Here we consider the homogeneous stochastic model defined at the beginning of section \ref{symsec} with reference to the beginning of section \ref{second}. We assume that transmission processes are Poisson with rate $\beta$ and that they are independent of the recovery processes, specifically $f(\tau)=\beta {\rm e}^{- \beta \tau} \int_{\tau}^{\infty} r(\tau') \mbox{d}\tau'$. We assume that $r(\tau)$ is continuous so that we may apply Leibniz's integral rule to compute derivatives (see appendix \ref{cont}). In this case, a pairwise system incorporating a general infectious period can be derived from the homogeneous message passing system (\ref{hom3})-(\ref{Fgraph}) with the additional variables:
\begin{eqnarray} \label{pair1} SS_{\text{mes}} (t) & \equiv & z^2 F_{\text{sym}}(t)^{2(n-1)}, \\ \label{pair2} SI_{\text{mes}}(t) & \equiv & z F_{\text{sym}}(t)^{n-1} \Bigg( \frac{-  \dot{F}_{\text{sym}}(t)}{\beta} \Bigg) , \end{eqnarray}
where $SS_{\text{mes}}(t)$ approximates the probability that a pair of neighbours are susceptible at time $t$, and $SI_{\text{mes}}(t)$ approximates the probability that the first is susceptible and the second is infected at time $t$ (see Wilkinson and Sharkey (2014, section II B) where these pairwise quantities were first considered in the context of message passing). To understand the construction of the factor in brackets in (\ref{pair2}), note that for any pair of neighbours $i,j,$ the probability that $i$ is susceptible and $j$ is infected at time $t$ remains the same when $i$ is placed into the cavity state. Further, when transmission processes are Poisson with rate $\beta$, we must have that:
\begin{eqnarray} \nonumber \label{factor} \dot{H}^{i \leftarrow j}(t) &=& - \beta \, {\rm P}( \text{$j$ infected at time $t$ and no infectious} \\
 && \qquad \text{contacts from $j$ to $i$ before time $t$} \mid \text{$i$ in cavity}) .  \end{eqnarray}
Thus, the factor in brackets in (\ref{pair2}) can be seen to approximate the probability on the right-hand side of (\ref{factor}) for any pair of neighbours $i,j$ (recall that $F_{\text{sym}}(t)$ approximates $H^{i \leftarrow j}(t)$ for any pair of neighbours $i,j$).

 To obtain population-level quantities, we define (as in Sharkey (2008, appendix B)):
\begin{equation} \label{pop}
[S](t) \equiv N S_{\text{mes}}(t), \quad [I](t) \equiv N I_{\text{mes}}(t), \quad [SS](t) \equiv n N SS_{\text{mes}}(t), \quad [SI](t) \equiv n N SI_{\text{mes}}(t),
\end{equation}
where $N$ is a positive number. Note that~\eqref{hom3} and~\eqref{pair2} imply
\begin{equation}
\label{pairF}
\dot{F}_{\text{sym}}(t) = - \beta F_{\text{sym}}(t) \frac{ SI_{\text{mes}}(t)}{ S_{\text{mes}}(t)} \qquad (S_{\text{mes}}(t) \neq 0),
\end{equation}
so, since $F_{\text{sym}}(0)=1$, we have:
\begin{equation} \label{pairF2}
F_{\text{sym}}(t) = \text{exp} \Bigg( - \int_0^t \beta  \frac{ SI_{\text{mes}}(\tau)}{ S_{\text{mes}}(\tau)} \mbox{d} \tau \Bigg) \qquad (S_{\text{mes}}(t)  \neq 0).
\end{equation}

 Substituting from (\ref{hom3})-(\ref{Rpair}) and (\ref{pair1}), and using (\ref{pairF}), it is straightforward to write down the time derivatives of $[S](t)$, $[I](t)$ and $[SS](t)$ as in (\ref{mspair})-(\ref{msspair}). 

Finding the time derivative of $[SI](t)$ is more involved. Setting $u=t-\tau$ in~\eqref{Fgraph} and differentiating with respect to $t$ using Leibniz's integral rule yields, recalling $f(\tau)= \beta {\rm e}^{- \beta \tau} \int_{\tau}^{\infty} r(\tau') \mbox{d} \tau'$, that
\begin{eqnarray} \nonumber
\dot{F}_{\text{sym}}(t)&=& - \beta \big( F_{\text{sym}}(t) -y -z F_{\text{sym}}(t)^{n-1} \big) \\ \label{Fpairderiv}
&& + \int_0^t \beta {\rm e}^{- \beta \tau} r (\tau) \Big(  1-y-z F_{\text{sym}}(t - \tau)^{n-1} \Big) \mbox{d} \tau.
\end{eqnarray}
Substituting from (\ref{pair2}) and (\ref{Fpairderiv}) into (\ref{pop}), we can write
\begin{eqnarray} \nonumber
[SI](t)&=& n N z F_{\text{sym}}(t)^{n-1} \Bigg( \frac{- \dot{F}_{\text{sym}}(t)}{\beta} \Bigg) \\ \nonumber
&=&  n N z F_{\text{sym}}(t)^{n-1} \Bigg[ F_{\text{sym}}(t) - y - z F_{\text{sym}}(t)^{n-1} \\ \label{SIeq}
 &&- \int_0^{t}  {\rm e}^{- \beta \tau} r (\tau) \Big( 1 - y - z F_{\text{sym}}(t-\tau)^{n-1} \Big) \mbox{d} \tau      \Bigg].
\end{eqnarray}
Differentiating the right-hand side of (\ref{SIeq}), we can now express the time derivative of $[SI](t)$ as
\begin{eqnarray} \label{SIde} \nonumber
\dot{[SI]}(t)&=& n (n-1) N z F_{\text{sym}}(t)^{n-2} \dot{F}_{\text{sym}}(t) \Bigg( \frac{- \dot{F}_{\text{sym}}(t)}{\beta} \Bigg) \\ \nonumber
&& + n N z F_{\text{sym}}(t)^{n-1} \dot{F}_{\text{sym}}(t) \\ \nonumber
&& -  n (n-1) N z^2 F_{\text{sym}}(t)^{2n-3} \dot{F}_{\text{sym}}(t) \\ \nonumber
&& + n (n-1) N z^2 F_{\text{sym}}(t)^{n-1}  \int_0^t {\rm e}^{- \beta \tau} r (\tau) F_{\text{sym}}(t-\tau)^{n-2} \dot{F}_{\text{sym}}(t- \tau) \mbox{d} \tau \\
&& -  n N z F_{\text{sym}}(t)^{n-1}{\rm e}^{- \beta t} r (t) (1-y-z).
\end{eqnarray}
Substituting from (\ref{hom3}),(\ref{pair1}),(\ref{pair2}),(\ref{pop}),(\ref{pairF}) and (\ref{pairF2}) into (\ref{SIde}) yields the expression for $\dot{[SI]}(t)$ in (\ref{msipair}); the terms on the right-hand side of (\ref{msipair}) are ordered by equality with the terms on the right-hand side of (\ref{SIde}).

\section{Continuity conditions for the application of Leibniz's integral rule and Gronwall's inequality}
\label{cont}

To derive \eqref{Fsymdot}, Leibniz's integral rule is applied to \eqref{Fgraph}, and this is valid if $F_{\text{sym}}(t)$ is continuously differentiable. Similarly, the application of the rule in the derivation of \eqref{Fpairderiv} and \eqref{SIde} is valid if $f(\tau)$ and $F_{\text{sym}}(t)$ are continuously differentiable. Here we show that $F_{\text{sym}}(t)$ is continuously differentiable if $f(\tau)$ is continuously differentiable. Note that if $f(\tau)= \beta e^{- \beta \tau} \int_{\tau}^{\infty} r(\tau') \mbox{d} \tau'$ then $f(\tau)$ is continuously differentiable when $r(\tau)$ is continuous.

With reference to the message passing system, \eqref{hom3}-\eqref{Fgraph}, assume that $f(\tau)$ is continuously differentiable. Thus we may apply Leibniz's integral rule to \eqref{Fgraph}, after setting $\tau'=t-\tau$, in order to compute the derivative of $F_{\text{sym}}(t)$ as follows
\begin{equation} \label{Fdotcont} \dot{F}_{\text{sym}}(t)= - \int_0^t \dot{f}(t- \tau')(1 - y - z F_{\text{sym}}(\tau')^{n-1}) \mbox{d}\tau' - f(0)(1 - y - z F_{\text{sym}}(t)^{n-1}). \end{equation}
It follows from Appendix~\ref{feasunique} that $F_{\text{sym}}(t)$ is continuous.  Thus, since $\dot{f}(\tau)$ is also continuous, \eqref{Fdotcont} implies that $\dot{F}_{\text{sym}}(t)$ is continuous. 

To derive \eqref{gronout}, Gronwall's inequality is applied to \eqref{gronwall}, and this is valid if $\dot{S}_{\text{mes}(n)}(t)$ and $\dot{S}(t)$ are continuous. By condition (iii) of Theorem \ref{kermthm}, we have that $\dot{F}_{\text{sym}(n)}(t)$ is continuous (by the above argument), so $\dot{S}_{\text{mes}(n)}(t)$ is continuous. Conditions (i) and (iii) imply that $f^*(t)$ is continuous, which implies that $\dot{S}(t)$ is continuous.

We note that Leibniz's integral rule was assumed to be applicable in Wilkinson and Sharkey (2014). It is straightforward, using a similar argument to above, to show that the application of the rule in that paper is valid if $f_{ij}(\tau)$ is continuously differentiable for all $i \in V, j \in N_i$. 

\section{Proof of \eqref{gronwall}}
\label{proving}

It follows from~\eqref{hom11} and~\eqref{Smean} that, for all $t \in [0,T]$,
\begin{equation}
\label{modSSn}
\left|\dot{S}_{\text{mes}(n)}(t)- \dot{S}(t)\right| \le A_n(t)+B_n(t),
\end{equation}
where
\begin{equation*}
A_n(t)=\left|\frac{ S_{\text{mes}(n)}(t)}{F_{\text{sym}(n)}(t)} \Bigg[ \frac{n-1}{n} \int_0^t n f_n(\tau) \frac{\dot{S}_{\text{mes}(n)}(t- \tau)}{F_{\text{sym}(n)}(t-\tau)} \mbox{d} \tau  \Bigg]-S(t) \int_0^t  f^*(\tau) \dot{S}(t- \tau) \mbox{d} \tau \right|
\end{equation*}
and
\begin{equation*}
B_n(t)=\left|\frac{ S_{\text{mes}(n)}(t)}{F_{\text{sym}(n)}(t)} n f_n(t)(1-y-z)-S(t)I(0)f^*(t)\right|.
\end{equation*}

Now 
\begin{equation}
\label{AnAn1An2}
A_n(t)\le A_n^{(1)}(t)+A_n^{(2)}(t),
\end{equation}
where
\begin{align*}
A_n^{(1)}(t)=\left|\frac{ S_{\text{mes}(n)}(t)}{F_{\text{sym}(n)}(t)}\right. \Bigg[ \frac{n-1}{n} \int_0^t n f_n(\tau)& \frac{\dot{S}_{\text{mes}(n)}(t- \tau)}{F_{\text{sym}(n)}(t-\tau)} \mbox{d} \tau  \Bigg]\\
&  -\left.S_{\text{mes}(n)}(t) \int_0^t  f^*(\tau) \dot{S}(t- \tau) \mbox{d} \tau \right|
\end{align*}
and
\begin{equation*}
A_n^{(2)}(t)=\left|S_{\text{mes}(n)}(t)-S(t)\right|\times\left| \int_0^t f^*(\tau) \dot{S}(t- \tau) \mbox{d} \tau \right|.
\end{equation*}
Considering $ A_n^{(1)}(t)$, note that, since $0 \le S_{\text{mes}(n)}(t) \le 1$, 
\begin{equation}
\label{An1bound}
A_n^{(1)}(t) \le \left(\frac{n-1}{n}\right)\frac{1}{F_{\text{sym}(n)}(t)}A_n^{(11)}(t)+ A_n^{(12)}(t),
\end{equation} 
where
\begin{eqnarray*}
	A_n^{(11)}(t) &=&\left|\int_0^t  n f_n(\tau) \frac{\dot{S}_{\text{mes}(n)}(t- \tau)}{F_{\text{sym}(n)}(t-\tau)} \mbox{d} \tau - \int_0^t  f^*(\tau) \dot{S}(t- \tau) \mbox{d} \tau\right|\\
	&\le&\left|\int_0^t\frac{n f_n(\tau)}{F_{\text{sym}(n)}(t-\tau)}\left(\dot{S}_{\text{mes}(n)}(t- \tau)-\dot{S}(t- \tau)\right)\mbox{d} \tau\right|\\
	&& \quad +\left|\int_0^t \left(\frac{n f_n(\tau)}{F_{\text{sym}(n)}(t-\tau)}-f^*(\tau)\right)\dot{S}(t- \tau)\mbox{d} \tau\right|
\end{eqnarray*}
and
\begin{equation*}
A_n^{(12)}(t)=\left|\int_0^t  f^*(\tau) \dot{S}(t- \tau) \mbox{d} \tau \right|\times\left|\left(\frac{n-1}{n}\right)\frac{1}{F_{\text{sym}(n)}(t)}-1\right|.
\end{equation*}
Now conditions (i), (ii) and~\eqref{Flargen} imply that, for all $t \in [0,T],\tau \in [0,t]$, 
\begin{equation*}
\frac{n f_n(\tau)}{F_{\text{sym}(n)}(t-\tau)} \le \frac{M_T+\epsilon_n(T)}{1-\epsilon_n^{(1)}(T)}
\end{equation*}
and
\begin{eqnarray} \nonumber
\left|\frac{n f_n(\tau)}{F_{\text{sym}(n)}(t-\tau)}-f^*(\tau)\right|& \le &\frac{1}{F_{\text{sym}(n)}(t-\tau)}\Big( \left|n f_n(\tau)-f^*(\tau)\right|\\
&&+f^*(\tau)\left(1-F_{\text{sym}(n)}(t-\tau)\right)\Big) \nonumber \\
&\le& \frac{\epsilon_n(T)+M_T \epsilon_n^{(1)}(T)}{1-\epsilon_n^{(1)}(T)},\label{nfFbound}
\end{eqnarray}
whence
\begin{align*}
A_n^{(11)}(t) &\le\frac{M_T+\epsilon_n(T)}{1-\epsilon_n^{(1)}(T)} \int_0^t \left|\dot{S}_{\text{mes}(n)}(t- \tau)- \dot{S}(t- \tau)\right| \mbox{d} \tau\\
&\qquad+ \frac{\epsilon_n(T)+M_T \epsilon_n^{(1)}(T)}{1-\epsilon_n^{(1)}(T)} \left|\int_0^t \dot{S}(t- \tau) \mbox{d} \tau \right|\\
&\le
\frac{M_T+\epsilon_n(T)}{1-\epsilon_n^{(1)}(T)} \int_0^t \left|\dot{S}_{\text{mes}(n)}(u)- \dot{S}(u)\right| \mbox{d} u +\frac{\epsilon_n(T)+M_T \epsilon_n^{(1)}(T)}{1-\epsilon_n^{(1)}(T)},
\end{align*}
as $\int_0^t \dot{S}(t- \tau) \mbox{d} \tau = S(0)-S(t) \in [0,1]$.  A similar argument, noting that
\begin{equation*}
\left|\int_0^t  f^*(\tau) \dot{S}(t- \tau) \mbox{d} \tau \right|\le \int_0^t \left|f^*(\tau) \dot{S}(t- \tau)\right| \mbox{d} \tau
\le M_T[S(0)-S(t)]\le M_T,
\end{equation*}
shows that
\begin{equation*}
A_n^{(12)}(t) \le \frac{M_T\left(\epsilon_n^{(1)}(T)+\frac{1}{n}\right)}{1-\epsilon_n^{(1)}(T)}.
\end{equation*}
Hence, recalling~\eqref{An1bound},
\begin{eqnarray} \nonumber
A_n^{(1)}(t) &\le& \frac{M_T+\epsilon_n(T)}{\left(1-\epsilon_n^{(1)}(T)\right)^2}\int_0^t \left|\dot{S}_{\text{mes}(n)}(u)- \dot{S}(u)\right| \mbox{d} u \\
&&+ \frac{\epsilon_n(T)+M_T \epsilon_n^{(1)}(T)}{\left(1-\epsilon_n^{(1)}(T)\right)^2} +\frac{M_T\left(\epsilon_n^{(1)}(T)+\frac{1}{n}\right)}{1-\epsilon_n^{(1)}(T)}. \label{An1fin}
\end{eqnarray}

Turning to $A_n^{(2)}(t)$, note that since $S_{\text{mes}(n)}(0)=S(0)$,
\begin{eqnarray*}
	\left|S_{\text{mes}(n)}(t)-S(t)\right|&=&\left| \int_0^t \dot{S}_{\text{mes}(n)}(u)-\dot{S}(u) \mbox{d} u\right|\\
	&\le&
	\int_0^t \left|\dot{S}_{\text{mes}(n)}(u)-\dot{S}(u)\right| \mbox{d} u,
\end{eqnarray*}
so
\begin{equation} \label{An2fin}
A_n^{(2)}(t) \le M_T \int_0^t \left|\dot{S}_{\text{mes}(n)}(u)-\dot{S}(u)\right| \mbox{d} u.
\end{equation}

Further, since $I(0)=1-y-z$ and $0 \le I(0), S_{\text{mes}(n)}(t) \le 1$,
\begin{eqnarray} \nonumber
	B_n(t)&=&I(0)\left|\frac{ S_{\text{mes}(n)}(t)}{F_{\text{sym}(n)}(t)} n f_n(t)-S(t)f^*(t)\right|\\ \nonumber
	&\le&I(0)\left(f^*(t)\left|S_{\text{mes}(n)}(t)-S(t)\right|+ S_{\text{mes}(n)}(t)\left|\frac{n f_n(t)}{F_{\text{sym}(n)}(t)}-f^*(t)\right|\right)\\ \label{Bnfin}
	&\le& M_T \int_0^t \left|\dot{S}_{\text{mes}(n)}(u)-\dot{S}(u)\right| \mbox{d} u
	+\frac{\epsilon_n(T)+M_T \epsilon_n^{(1)}(T)}{1-\epsilon_n^{(1)}(T)},
\end{eqnarray}
using a similar result to~\eqref{nfFbound}. 

Thus, using \eqref{modSSn}, \eqref{AnAn1An2}, \eqref{An1fin}, \eqref{An2fin} and \eqref{Bnfin}, we may define
\begin{equation*}
	A(n,T)=2M_T+\frac{M_T+\epsilon_n(T)}{\left(1-\epsilon_n^{(1)}(T)\right)^2}
	\end{equation*}
	and
	\begin{equation*}
	B(n,T)=\frac{\left(\epsilon_n(T)+M_T \epsilon_n^{(1)}(T)\right)\left(2-\epsilon_n^{(1)}(T)\right)}{\left(1-\epsilon_n^{(1)}(T)\right)^2}+\frac{M_T\left(\epsilon_n^{(1)}(T)+\frac{1}{n}\right)}{1-\epsilon_n^{(1)}(T)},
	\end{equation*}
such that inequality \eqref{gronwall} is satisfied for all $t \in [0,T]$. Further, since both $\epsilon_n(T)$ and $\epsilon_n^{(1)}(T)$ converge to $0$ as $n \to \infty$, it follows that $B(n,T) \to 0$ as $n \to \infty$ and $0 \le A(n,T) \le 4 M_T$ for all sufficiently large $n$.

\section{Proof of Corollary \ref{kermboundcor}}
\label{massbound}
Here, we consider the homogeneous stochastic model (defined at the beginning of section \ref{symsec}, with reference to the beginning of section \ref{second}) for the special case where transmission and recovery processes are independent and Poisson with rates $\beta$ and $\gamma$ respectively. Specifically, $h(\tau)=\beta {\rm e}^{-\beta \tau}$, $r(\tau)=\gamma {\rm e}^{- \gamma \tau}$ and $f(\tau)=\beta {\rm e}^{-(\beta + \gamma)\tau}$. For convenience, we let $k$ denote the regular degree of the symmetric graph (instead of $n$).

For this special case, we show here that for the same initial conditions and parameters,
\begin{equation}  \label{kermboundapp}
P_{S}(t) \ge S_{\text{mes}}(t) > S(t) \qquad  \mbox{for all  } t >0,
\end{equation}
where $P_{S}(t)$ is the probability that an arbitrary individual is susceptible at time $t$ (this being the same for all individuals) and $S(t)$ is given by the special case of the Kermack-McKendrick model (\ref{homS})-(\ref{hommarkov}), with $S(0)=z>0, I(0)=1-y-z>0$ and $R(0)=y$; $S_{\text{mes}}(t)$ is given by (\ref{hom3}) and (\ref{Fgraph}) but with $n$ replaced by $k$. Note that since
\begin{eqnarray} \nonumber
P_R(t)&=& y + \int_0^t \gamma {\rm e}^{- \gamma \tau}\Big(1-y-P_S(t-\tau) \Big) \mbox{d} \tau, \\
\nonumber
R_{\text{mes}}(t)&=& y + \int_0^t \gamma {\rm e}^{- \gamma \tau}\Big(1-y-S_{\text{mes}}(t-\tau) \Big) \mbox{d} \tau,
\end{eqnarray}
and
\begin{eqnarray} \nonumber
R(t)&=& y + \int_0^t \gamma {\rm e}^{- \gamma \tau}\Big(1-y-S(t-\tau) \Big) \mbox{d} \tau,
\end{eqnarray}
then \eqref{kermboundapp} implies that $R(t) > R_{\text{mes}}(t) \ge P_{R}(t)$ for all $t>0$.

We already have $P_{S}(t)\ge S_{\text{mes}}(t)$ by Theorem~\ref{boundthm} and the fact that the message passing system, in this case, has a unique solution. Thus, we may prove \eqref{kermboundapp} and Corollary~\ref{kermboundcor} by showing that $S_{\text{mes}}(t) > S(t)$ for all $t>0$.

Setting $f_n(\tau)= (\beta k/n){\rm e}^{-(\beta k/n+ \gamma) \tau }$ and $f^*(\tau)=\beta k {\rm e}^{- \gamma \tau }$, the Kermack-McKendrick model reduces to the system of ODEs (\ref{homS})-(\ref{hommarkov}) and the conditions for Theorem~\ref{kermthm} are satisfied. Thus, letting $F_{\text{sym}(n)}(t)$ be defined by (\ref{Fgraph}) but with $f(\tau)$ replaced by $f_{n}(\tau)$, and letting $S_{\text{mes}(n)}(t)$ be defined by (\ref{hom3}) but with $F_{\text{sym}}(t)$ replaced by $F_{\text{sym}(n)}(t)$ (as in subsection~\ref{large}),
$$\lim_{n \to \infty} S_{\text{mes}(n)}(t)=S(t)$$
and
$$S_{\text{mes}(n)}(t)=S_{\text{mes}}(t) \qquad \mbox{if $n=k$}.$$
Therefore, if $S_{\text{mes}(n)}(t) \equiv z F_{\text{sym}(n)}(t)^n$ is strictly decreasing with respect to $n$, for all $t > 0$, then we have $S_{\text{mes}}(t) > S(t)$ for all $t > 0$. We now show this to be the case.

Letting $u_n(t)=F_{\text{sym}(n)}(t)^n (=S_{\text{mes}(n)}(t)/z)$, we can write (c.f. \eqref{case1a})
$$  \dot{u}_{n}(t) = n \gamma \Big( u_{n}(t)^{\frac{n-1}{n}} - u_{n}(t) \Big) - \beta k  \Big( u_{n}(t)-yu_{n}(t)^{\frac{n - 1}{n}}- z u_{n}(t)^{\frac{2(n - 1)}{n}}   \Big).$$
For fixed $u \in (0,1)$, we have that $u^{\frac{n -1}{n}}$ is strictly decreasing with $n$, and also that
\begin{eqnarray} \nonumber
n(  u^{\frac{n-1}{n}}-u ) &=& n u (     u^{\frac{-1}{n}}-1  ) \\ \nonumber
&=& n {\rm e}^{- \lambda} ({\rm e}^{\frac{\lambda}{n}}-1) \qquad (\mbox{where }u={\rm e}^{- \lambda}, \mbox{ so } \lambda > 0) \\ \nonumber
&=& {\rm e}^{- \lambda} \sum_{k=1}^{\infty} \frac{1}{k !} \frac{\lambda^k}{n^{k-1}}
\end{eqnarray}
is strictly decreasing with $n$. Therefore, since $u_n(0)=1$ and $u_n(t) \in (0,1)$ for $t>0$, it follows that $u_{n}(t)$ (and hence $S_{\text{mes}(n)}(t)$) is strictly decreasing with $n$ for all $t>0$. 

\section*{Acknowledgements}
R.R.W. acknowledges support from EPSRC (DTA studentship). R.R.W. and K.J.S. acknowledge support from the Leverhulme Trust (RPG-2014-341). We thank the reviewers and associate editor for their constructive comments which have improved the presentation of the paper.


\end{document}